\newtheorem{proposition}{Proposition}
\newtheorem{Lemma}{Lemma}
\newtheorem{Theorem}{Theorem}
\newtheorem{definition}{Definition}
\newtheorem{problem}{Problem}
\algnewcommand\algorithmicforeach{\textbf{for each}}
\def\BibTeX{{\rm B\kern-.05em{\sc i\kern-.025em b}\kern-.08em
    T\kern-.1667em\lower.7ex\hbox{E}\kern-.125emX}}
\begin{document}

\title{Verification and Synthesis of Control Barrier Functions
}
\author{Andrew Clark \thanks{The author is with the Department of Electrical and Computer Engineering, Worcester Polytechnic Institute, 100 Institute Road, Worcester, MA, USA 01609. Email: aclark@wpi.edu.}
}

\maketitle

\begin{abstract}
Control systems often must satisfy strict safety requirements over an extended operating lifetime. Control Barrier Functions (CBFs) are a promising recent approach to constructing simple and safe control policies. This paper proposes a framework for verifying that a CBF guarantees safety for all time and synthesizing CBFs with verifiable safety in polynomial control systems. Our approach is to show that safety of CBFs is equivalent to the non-existence of solutions to a family of polynomial equations, and then prove that this non-existence is equivalent to a pair of sum-of-squares constraints via the Positivstellensatz of algebraic geometry. We develop this Positivstellensatz to verify CBFs, as well as generalization to high-degree systems and multiple CBF constraints. We then propose a set of heuristics for CBF synthesis, including a general alternating-descent heuristic, a specialized approach for compact safe regions, and an approach for convex unsafe regions. Our approach is illustrated on two numerical examples. 
\end{abstract}



\section{Introduction}
\label{sec:intro}

Safety is a critical property of cyber-physical systems in applications including autonomous vehicles \cite{koopman2017autonomous}, energy \cite{morison2004power}, and medicine \cite{rovetta2000telerobotic}. In the control-theoretic context, safety is typically defined as ensuring that the state trajectory remains within a given safe region (or, equivalently, avoids a given unsafe region) for all time. It is not always possible, however, to guarantee safety by introducing a control input, even for controllable systems without actuation constraints. This has motivated substantial research into characterizing the set of states from which safety can be guaranteed, and constructing controllers that ensure safety from those states \cite{aubin2011viability}.

Control Barrier Functions (CBFs) are a promising recent approach to safe control.  CBFs are functions of the system state that are negative outside the safe region, so that safety can be guaranteed by ensuring that the CBF remains nonnegative. In particular, it has been shown that, if the control input satisfies a particular linear constraint induced by the CBF for all time, then safety is ensured \cite{ames2016control}. CBFs have been generalized to high-degree   \cite{xiao2019control}, stochastic \cite{clark2019control,jagtap2020formal}, and uncertain \cite{xiao2020adaptive,cheng2019end} systems, and applied in diverse domains such as bipedal locomotion \cite{hsu2015control}, automotive control \cite{ames2014control,xiao2020bridging}, and UAVs \cite{xu2018safe,chen2020guaranteed}. 

While there is a large body of work on proving safety given CBF-induced linear control constraints, the problem of verifying whether such constraints can always be satisfied has received less attention. While constraint satisfaction is guaranteed for fully-actuated systems, the more general case remains open. Recent approaches have been proposed that consider finite time horizons \cite{jagtap2020formal,santoyo2021barrier} or assume that the control input is a ``backup controller'', however, such methods may be conservative. A systematic approach to verification of CBFs, as well as synthesizing CBFs with provable guarantees on safety, would enable broader adoption of CBFs as well as construction of CBFs with minimal performance trade-off.

In this paper, we develop a framework for verification and synthesis of CBFs for polynomial control systems. Our approach is based on the fact that the safety guarantees rely on two properties. First, there should be no point on the boundary of the CBF for which the Lie derivative of the CBF is always negative. Second, the safe region of the CBF should be contained within the overall safe region. We show that both conditions can be expressed as the non-existence of solutions to polynomial equations, which can then be mapped to a set of sum-of-squares inequalities via the Positivstellensatz \cite{stengle1974nullstellensatz}. This Positivstellensatz mapping allows verification of CBFs via semidefinite programming and construction of CBFs. Our specific contributions are as follows:
\begin{itemize}
\item We develop a SOS program for verifying that a given CBF construction ensures safety. We extend this approach to high-order CBFs and systems with multiple CBF constraints.
\item We propose an alternating-descent heuristic for synthesizing CBFs using the proposed conditions.
\item We consider special cases of our approach, including compact safe regions, and constraints where the unsafe region is a union of non-overlapping convex sets. 
\item We evaluate our approach through a numerical study.
\end{itemize}

This paper is organized as follows. Section \ref{sec:related} presents related work. Section \ref{sec:preliminaries} gives preliminary results. Section \ref{sec:verification} gives our problem formulation and CBF verification approach. Section \ref{sec:construction} presents our approach to CBF synthesis. Section \ref{sec:simulation} contains numerical results. Section \ref{sec:conclusion} concludes the paper.
\section{Related Work}
\label{sec:related}
CBFs were initially proposed in \cite{ames2014control} and further developed in \cite{hsu2015control}. Extensions of CBFs to high-degree \cite{nguyen2016exponential,xiao2019control}, stochastic \cite{clark2019control}, Euler-Lagrange \cite{barbosa2020provably,cortez2020correct}, uncertain \cite{folkestad2020data}, and hybrid \cite{jahanshahi2020synthesis} systems have been proposed. Minimal CBFs were proposed in \cite{konda2020characterizing}. CBFs with actuation constraints were investigated in \cite{chen2020guaranteed}. Most of these works provide safety guarantees under the assumption that a CBF satisfying given safety and reachability properties exists, or derive CBFs from a priori known backup controllers. Hence, they are complementary to the present paper, which investigates synthesis and verification of CBFs.

Some recent works have investigated the problem of constructing CBFs from the point of view of \emph{abstraction-free synthesis}, in which a higher-level specification (e.g., a temporal logic formula) is implemented by a sequence of control actions, each represented by one or more CBFs \cite{yang2019sampling,jagtap2020formal,santoyo2021barrier,anand2021compositional,niu2020control}. These approaches are distinct from this paper; in particular, they focus on safety over finite time horizons, as opposed to the positive set invariance guarantees provided in this paper. Sum-of-squares optimization techniques for safety verification, including SOS methods for barrier certificate construction, have been proposed \cite{prajna2007framework}. 

The problem of identifying positive invariant sets and characterizing states from which safety can be guaranteed has received extensive interest \cite{aubin2011viability,blanchini1999set}. Other related methods include HJB equations \cite{fisac2019bridging} and Lyapunov analysis \cite{romdlony2014uniting}. The results presented in this paper on verification of safety and synthesis of safe control policies, however, have not appeared in the existing literature. The problem of avoiding convex obstacles (considered in Section \ref{subsec:convex-unsafe} of this paper) has been studied, e.g., in \cite{khansari2012dynamical}, where it was shown that the obstacles can be avoided through a separating hyperplane construction. CBF-based policies for guaranteed convex obstacle avoidance, however, have not been proposed.
\section{Preliminaries and Background}
\label{sec:preliminaries}

In this section, we define sum-of-squares polynomials and provide background on the Positivstellensatz. A sum-of-squares (SOS) polynomial is a polynomial $f(x)$ such that 
\begin{displaymath}
f(x) = \sum_{i=1}^{m}{g_{i}(x)^{2}}
\end{displaymath}
for some polynomials $g_{1}(x),\ldots,g_{m}(x)$. Selecting coefficients of $f(x)$ to ensure that $f(x)$ is a sum-of-squares can be formulated as a semidefinite program, a procedure known as sum-of-squares optimization \cite{parrilo2003semidefinite}. 

The Positivstellensatz gives necessary and sufficient conditions for a set of polynomial constraints to have a real solution. It is stated as follows.
\begin{Theorem}[Positivstellensatz \cite{stengle1974nullstellensatz}]
\label{theorem:Positivstellensatz}
Consider a collection of polynomials $\{f_{j} : j=1,\ldots,r\}$, $\{g_{i} : i=1,\ldots,m\}$, and $\{h_{k} : k=1,\ldots,s\}$. Then the set 
\begin{multline*}
\left(\bigcap_{j=1}^{r}{\{x : f_{j}(x) \geq 0\}}\right) \cap \left(\bigcap_{i=1}^{m}{\{x : g_{i}(x) = 0\}}\right) \\
\cap \left(\bigcap_{k=1}^{s}{\{x : h_{k}(x) \neq 0\}}\right)
\end{multline*}
 is empty if and only if there exist polynomials $f$, $g$, and $h$ satisfying (i) $f(x) = \sum_{j=1}^{r}{\alpha_{j}(x)f_{j}(x)}$, where $\alpha_{j}(x)$ is a sum-of-squares; (ii) $g(x) = \sum_{i=1}^{m}{\eta_{i}(x)g_{i}(x)}$ for some polynomials $\eta_{1},\ldots,\eta_{m}$, (iii) $h$ is a product of powers of the $h_{k}$'s, and (iv) $f(x) + g(x) + h(x)^{2} = 0$ for all $x$.
\end{Theorem}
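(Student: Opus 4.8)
The stated result is the classical Positivstellensatz of Stengle, so the plan is not to develop new machinery but to recall the standard proof from real algebraic geometry, assuming the basic theory of ordered and real closed fields. The argument splits into an easy ``if'' direction and a substantially harder ``only if'' direction.

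For the ``if'' direction I would simply substitute. If polynomials $f,g,h$ satisfying (i)--(iv) exist and $x$ lies in the intersection in the statement, then each $\alpha_j(x)f_j(x)\ge 0$ gives $f(x)\ge 0$, each $\eta_i(x)g_i(x)=0$ gives $g(x)=0$, and $h(x)$ is a product of powers of the nonzero values $h_k(x)$, so that $h(x)^2>0$; adding these contradicts $f(x)+g(x)+h(x)^2=0$. Hence no point of the set can coexist with the certificates.

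For the converse I would argue by contraposition, building a point of the intersection under the assumption that no certificate exists. Let $T\subseteq\mathbb{R}[x]$ be the preordering generated by $f_1,\dots,f_r$ (the sum-of-squares module spanned by all finite products of the $f_j$, which is how condition (i) should be read), let $\mathcal I$ be the ideal generated by $g_1,\dots,g_m$, and let $S$ be the multiplicative monoid generated by $h_1,\dots,h_s$. ``No certificate'' says exactly that there is no identity $p+q+s^2=0$ with $p\in T$, $q\in\mathcal I$, $s\in S$; call $(T,\mathcal I,S)$ \emph{proper} in that case. Using Zorn's lemma I would enlarge $(T,\mathcal I,S)$ to a maximal proper triple $(\bar T,\bar{\mathcal I},\bar S)$ and then read off structure from maximality: $\bar{\mathcal I}$ is a real prime ideal, the image of $\bar T$ in the fraction field $K$ of $\mathbb{R}[x]/\bar{\mathcal I}$ is the nonnegative cone of a total order on $K$, and the images of the $h_k$ are units of $K$. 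Passing to a real closure $R\supseteq K$, the tuple of coordinate images $\xi=(\bar x_1,\dots,\bar x_n)\in R^n$ then satisfies $f_j(\xi)\ge 0$, $g_i(\xi)=0$, and $h_k(\xi)\ne 0$ for every index.

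The final step is to descend from $R$ to $\mathbb{R}$: since $\mathbb{R}\subseteq R$ with both fields real closed, the Tarski--Seidenberg transfer principle (model completeness of the theory of real closed fields) makes the embedding elementary, so the existential sentence asserting $\bigwedge_j f_j(x)\ge 0 \wedge \bigwedge_i g_i(x)=0 \wedge \bigwedge_k h_k(x)\ne 0$ has a solution over $\mathbb{R}$ because it has one over $R$. That real solution is the point sought, which completes the contrapositive and hence the theorem. The main obstacle is the Zorn's-lemma core: showing that a maximal proper triple is forced to be prime and totally ordered and that the coordinate images give a genuine solution over the resulting real closed field. The substitution direction, the passage to the real closure, and the transfer step are routine once that core is established.
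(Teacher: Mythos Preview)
The paper does not prove this theorem at all; it is stated in Section~\ref{sec:preliminaries} as a background result and attributed directly to Stengle~\cite{stengle1974nullstellensatz}, with no accompanying argument. There is therefore nothing in the paper to compare your proposal against.

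That said, your outline is the standard route to Stengle's Positivstellensatz and is essentially correct. Two brief remarks. First, you are right to flag that condition~(i) as written in the paper (a sum $\sum_j \alpha_j f_j$ with $\alpha_j$ SOS) is the quadratic module rather than the preordering; the full ``only if'' direction of the classical theorem requires $f$ to lie in the preordering generated by the $f_j$, i.e., SOS combinations of all finite products of the $f_j$, exactly as you indicate parenthetically. Second, the core Zorn's-lemma step you identify as the main obstacle is indeed where the real work sits; the verification that a maximal proper triple yields a real prime ideal and a total order on the residue field is the substantive part, and your description of it is accurate but would need to be fleshed out in a full proof.
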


We note that we can verify that there does not exist a solution to $f_{j}(x) \geq 0$, $j=1,\ldots,r$ and $g_{i}(x) = 0$, $i=1,\ldots,m$ by choosing $\{h_{k}\}$ as the singleton set $h = 1$. The constraint then becomes $f(x) + g(x) + 1 = 0$ for all $x$.
\section{Problem Formulation: CBF Verification}
\label{sec:verification}
This section presents the problem of verifying a safe control policy. We first state the problem and then give our verification framework.
\subsection{Problem Statement and Background on CBFs}
\label{subsec:problem-statement}
We consider nonlinear control systems with dynamics
\begin{equation}
\label{eq:dynamics}
\dot{x}(t) = f(x(t)) + g(x(t))u(t)
\end{equation}
where $x(t) \in \mathbb{R}^{n}$ denotes the state, $u(t) \in \mathbb{R}^{m}$ is a control input, and  $f: \mathbb{R}^{n} \rightarrow \mathbb{R}^{n}$ and $g: \mathbb{R}^{n} \rightarrow \mathbb{R}^{m}$ are polynomials. The \emph{safe region} is defined as $\mathcal{C} = \{x: h(x) \geq 0\}$,  where $h: \mathbb{R}^{n} \rightarrow \mathbb{R}$ is a polynomial. The boundary of the safe region, denoted $\partial \mathcal{C}$, is defined by $\partial \mathcal{C} = \{x: h(x) = 0\}$. The \emph{viability kernel} is defined as follows \cite{gurriet2018online}.
\begin{definition}
\label{def:viability}
The viability kernel is a set $\Omega \subseteq \mathcal{C}$ such that, for any $x_{0} \in \Omega$, there is a control input signal $\{u(t) : t \in [0,\infty)\}$ that guarantees $x(t) \in \Omega$ for all time $t$ when the initial state is $x(0) = x_{0}$.
\end{definition}
The problem studied in this section is as follows.
\begin{problem}
\label{problem:verification}
Given a system (\ref{eq:dynamics}) and a safety constraint set $\mathcal{C}$, (i) verify whether a set $\tilde{\Omega}$ is contained in the viability kernel, and (ii) verify that a given control policy ensures that the system state remains in the viability kernel.
\end{problem}

Note that we do not consider any constraints on the control input $u(t)$ (e.g., limits on actuation). As a first step to our approach, we define control barrier functions and high-order control barrier functions as follows.

\begin{definition}
\label{def:CBF}
A function $b$ is a control barrier function for system (\ref{eq:dynamics}) if there is a class-K function $\alpha$ such that, for all $x$ with $b(x) \geq 0$, there exists $u$ satisfying 
\begin{equation}
\label{eq:CBF-def}
\frac{\partial b}{\partial x}(f(x) + g(x)u) \geq -\alpha(b(x))
\end{equation}
\end{definition}

The following result establishes the safety guarantees provided by CBFs.

\begin{Theorem}[\cite{ames2014control}]
\label{theorem:safety-CBF}
Suppose that $b$ is a CBF, $b(x(0)) \geq 0$, and $u(t)$ satisfies (\ref{eq:CBF-def}) for all $t$. Then the set $\{x: b(x) \geq 0\}$ is positive invariant.
\end{Theorem}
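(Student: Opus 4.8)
The plan is to reduce the set-invariance claim to a scalar differential inequality along trajectories and then apply a comparison argument. Fix $x(0)$ with $b(x(0)) \geq 0$ and an input $u(\cdot)$ satisfying (\ref{eq:CBF-def}) for all $t$, and let $x(\cdot)$ be the corresponding solution of (\ref{eq:dynamics}). Set $v(t) := b(x(t))$. Since $b$ is a polynomial (hence $C^{1}$) and $x(\cdot)$ is absolutely continuous, $v$ is absolutely continuous, and for (almost) every $t$ the chain rule together with (\ref{eq:dynamics}) and then (\ref{eq:CBF-def}) gives
\begin{displaymath}
\dot v(t) = \frac{\partial b}{\partial x}\bigl(f(x(t)) + g(x(t))u(t)\bigr) \geq -\alpha\bigl(v(t)\bigr).
\end{displaymath}
Thus $v$ satisfies $\dot v \geq -\alpha(v)$ with $v(0) \geq 0$, and the whole problem is now one about this scalar inequality.

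The next step is a scalar comparison argument. Consider the auxiliary initial value problem $\dot w = -\alpha(w)$, $w(0) = v(0)$. Because $\alpha$ is class-$\mathcal{K}$ we have $\alpha(0) = 0$, so $w \equiv 0$ is an equilibrium; consequently any solution issued from $w(0) \geq 0$ is non-increasing (as $\alpha(w) \geq 0$ for $w \geq 0$) and cannot pass below the equilibrium at the origin, hence stays in $[0, w(0)]$ for all $t \geq 0$. Applying the comparison lemma for scalar differential inequalities to $\dot v \geq -\alpha(v)$ with $v(0) \geq w(0)$ yields $v(t) \geq w(t) \geq 0$ for all $t$ in the interval of existence. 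Translating back, $b(x(t)) = v(t) \geq 0$, i.e. $x(t) \in \{x : b(x) \geq 0\}$ for all $t \geq 0$, which is precisely the asserted positive invariance. (An equivalent geometric viewpoint is Nagumo's theorem: positive invariance of the closed set $\{b \geq 0\}$ amounts to a sub-tangentiality condition on $\partial\mathcal{C}_{b} = \{b = 0\}$, which here reduces to $\dot v \geq -\alpha(0) = 0$ whenever $v = 0$.)

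The step I expect to demand the most care is making the comparison rigorous, for two related reasons. First, a class-$\mathcal{K}$ function need not be locally Lipschitz at the origin, so $\dot w = -\alpha(w)$ may lack a unique solution; this is harmless here since only a \emph{one-sided} bound is needed, obtained either by comparing $v$ against the maximal solution of the comparison equation, or directly by contradiction: if $v(\tau) < 0$ for some $\tau$, put $t_{1} := \sup\{t \leq \tau : v(t) \geq 0\}$, so that $v(t_{1}) = 0$ and $v < 0$ on $(t_{1}, \tau]$, and one shows (via the upper Dini derivative $D^{+}v$) that $v$ cannot decrease away from zero there, a contradiction. Second, there is a domain subtlety: $\alpha$ as stated is defined only on $[0,\infty)$, so (\ref{eq:CBF-def}) is meaningful along the trajectory only while $v(t) \geq 0$ — the argument is internally consistent precisely because it proves $v$ never exits $[0,\infty)$; alternatively, reading $\alpha$ as an \emph{extended} class-$\mathcal{K}$ function on $\mathbb{R}$ makes the contradiction immediate, since $v < 0$ would give $\dot v \geq -\alpha(v) > 0$ and monotonicity then rules out $v$ crossing zero downward. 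Finally, if $u(\cdot)$ is merely measurable, all derivative statements are interpreted almost everywhere and the absolutely-continuous (Dini) form of the comparison lemma is used; forward completeness of the closed loop is taken as implicit in the phrase ``for all $t$'' in the statement, or one argues on the maximal interval of existence.
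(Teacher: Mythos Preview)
The paper does not supply its own proof of this theorem: it is quoted as a known result from \cite{ames2014control} and stated without argument. Your comparison-lemma proof is correct and is essentially the standard derivation underlying that reference; the alternative Nagumo formulation you note in passing is in fact the viewpoint the paper adopts later when proving Proposition~\ref{prop:CBF-verify}. Nothing further is needed here.
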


A potential drawback of CBFs occurs in high-degree systems when $\frac{\partial b}{\partial x}g(x) = 0$. High-order CBFs (HOCBFs), defined as follows, address this weakness.

\begin{definition}[\cite{xiao2019control}]
\label{def:HOCBF}
For a function $b$, define functions $\psi_{0},\ldots,\psi_{r}$ by 
\begin{IEEEeqnarray*}{rCl}
\psi_{0}(x) &=& b(x) \\
\psi_{1}(x) &=& \dot{\psi}_{0}(x) + \alpha_{1}(\psi_{0}(x)) \\
 & \vdots & \\
 \psi_{r}(x) &=& \dot{\psi}_{r-1}(x) + \alpha_{r}(\psi_{r-1}(x))
 \end{IEEEeqnarray*}
 where $\alpha_{1},\ldots,\alpha_{r}$ are class-K functions. Define set $C_{i}$ by $C_{i}= \{x : \psi_{i}(x) \geq 0\}$ for $i=0,\ldots,r$. The functions $b,\psi_{1},\ldots,\psi_{r}$ define a HOCBF if $\frac{\partial \psi_{i}}{\partial x}g(x) = 0$ for $i < r$ and, for all $x \in \bigcap_{i=0}^{r}{C_{i}}$, there exists $u$ satisfying
 \begin{equation}
 \label{eq:HOCBF-def}
 L_{f}^{r}b(x) + L_{g}L_{f}^{r-1}b(x)u + O(b(x)) + \alpha_{m}(\psi_{r-1}(x)) \geq 0
 \end{equation}
 where $O(b(x))$ denotes   lower-order Lie derivatives of $b(x)$.
 \end{definition}
 
 Throughout the paper, we let $L$ denote the Lie derivative of a function. The following result establishes the positive invariance of HOCBFs.
 \begin{Theorem}[\cite{xiao2019control}]
 \label{theorem:safety-HOCBF}
 If $b$ is a HOCBF, then the set $\bigcap_{i=0}^{r}{C_{i}}$ is positive invariant.
 \end{Theorem}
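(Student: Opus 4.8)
The plan is to argue by induction, peeling off one layer $\psi_i$ at a time from the innermost constraint (\ref{eq:HOCBF-def}) down to $\psi_0 = b$, with a comparison (Nagumo-type) argument at each step. Fix $x(0) \in \bigcap_{i=0}^{r} C_i$ and a control $u(t)$ that meets (\ref{eq:HOCBF-def}) at each time. The key preliminary observation is that, since $\frac{\partial \psi_i}{\partial x}g(x)=0$ for $i<r$, the quantities $\psi_0,\ldots,\psi_{r-1}$ evaluated along a trajectory are independent of $u$, while (\ref{eq:HOCBF-def}) is, after matching the Lie-derivative terms, exactly the statement that $\psi_r(x(t))\ge 0$, i.e. $\dot\psi_{r-1}(x(t)) \ge -\alpha_r(\psi_{r-1}(x(t)))$ along the closed-loop trajectory.

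For the inductive step, suppose $\psi_i(x(t))\ge 0$ has been established on the interval of interest. Because $\psi_i = \dot\psi_{i-1} + \alpha_i(\psi_{i-1})$, this yields $\dot\psi_{i-1}(x(t)) \ge -\alpha_i(\psi_{i-1}(x(t)))$. Since $\alpha_i$ is class-$\mathcal{K}$, so $\alpha_i(0)=0$, and $\psi_{i-1}(x(0))\ge 0$, the comparison lemma --- comparing $\psi_{i-1}(x(t))$ with the (nonnegative) solution of $\dot y = -\alpha_i(y)$, $y(0)=\psi_{i-1}(x(0))$ --- gives $\psi_{i-1}(x(t))\ge 0$ on that interval. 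Running this from $i=r$ down to $i=1$ keeps all of $\psi_0,\ldots,\psi_{r-1}$ nonnegative, and together with $\psi_r(x(t))\ge 0$ from (\ref{eq:HOCBF-def}) we obtain $x(t)\in\bigcap_{i=0}^{r}C_i$. To avoid circularity in invoking the existence of a valid $u$, I would run the whole argument on the maximal interval $[0,T)$ on which the solution exists and stays in $\bigcap_{i=0}^{r}C_i$: on that interval such a $u$ exists by Definition \ref{def:HOCBF}, the induction applies, and a standard continuity/extension argument contradicts $T<\infty$.

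The main obstacle I expect is regularity bookkeeping rather than any substantive difficulty: class-$\mathcal{K}$ functions need only be continuous and strictly increasing, not Lipschitz, so the comparison lemma must be used in a form not requiring uniqueness of solutions to $\dot y = -\alpha_i(y)$ --- it suffices that $y\equiv 0$ is a solution through the origin, or one works with extended class-$\mathcal{K}$ functions so that the inequality $\dot\psi_{i-1}\ge -\alpha_i(\psi_{i-1})$ still forces $\psi_{i-1}$ to be non-decreasing whenever it reaches $0$. I would also verify carefully that $\dot\psi_{r-1} = L_f^r b + L_g L_f^{r-1}b\,u + O(b)$ so that (\ref{eq:HOCBF-def}) is term-by-term equivalent to $\psi_r(x)\ge 0$. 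Note that the $r=1$ case of this statement is exactly Theorem \ref{theorem:safety-CBF}, so the induction can equally be bootstrapped from that base case.
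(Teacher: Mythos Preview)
The paper does not give its own proof of this theorem---it is quoted as a result from \cite{xiao2019control}---so there is no in-paper argument to compare against directly. Your backward-induction-plus-comparison approach is the standard and correct proof, and indeed the paper uses essentially the same reasoning (Nagumo's theorem at each level, inducting from $\psi_r$ down to $\psi_0$) in its proof of Proposition~\ref{prop:HOCBF-alternate}, which is a mild generalization of the same statement. Your remarks on regularity of class-$\mathcal{K}$ functions and on closing the loop via a maximal-interval argument are correct caveats that the paper (and most of the CBF literature) glosses over.
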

 
 While these results lead to safety guarantees for CBFs and HOCBFs, they do not provide a procedure for verifying that such an input $u(t)$ exists for all $t$ (Problem \ref{problem:verification}). We derive such an approach in the next section.
 
 \subsection{Verification of CBFs}
 \label{subsec:verification}
 In what follows, we present an SOS-based approach to verifying a CBF and approximating the viability kernel. We have the following initial result.
 \begin{proposition}
 \label{prop:CBF-verify}
 The function $b(x)$ is a CBF iff there is no $x$ satisfying $b(x) = 0$, $\frac{\partial b}{\partial x}g(x) = 0$, and $\frac{\partial b}{\partial x}f(x) < 0$.
 \end{proposition}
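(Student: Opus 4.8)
\emph{Proof plan.} The idea is to connect the CBF inequality (\ref{eq:CBF-def}) to the sign of $\frac{\partial b}{\partial x}f(x)$ exactly on the set where the control term drops out, i.e.\ where $\frac{\partial b}{\partial x}g(x)=0$; off that set the inequality is trivially satisfiable. For necessity I would argue by contraposition. Suppose some $x^{\star}$ satisfies $b(x^{\star})=0$, $\frac{\partial b}{\partial x}(x^{\star})g(x^{\star})=0$, and $\frac{\partial b}{\partial x}(x^{\star})f(x^{\star})<0$. For \emph{any} class-$\mathcal{K}$ function $\alpha$ and \emph{any} input $u$, evaluating (\ref{eq:CBF-def}) at $x^{\star}$ annihilates the $g$-term while the right-hand side is $-\alpha(b(x^{\star}))=-\alpha(0)=0$; thus (\ref{eq:CBF-def}) would force $\frac{\partial b}{\partial x}(x^{\star})f(x^{\star})\geq 0$, a contradiction. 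Since this holds for every $\alpha$ and $u$, no choice makes $b$ a CBF.

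For sufficiency, assume no such $x$ exists, and construct a class-$\mathcal{K}$ function $\alpha$ witnessing Definition \ref{def:CBF}. Fix any $x$ with $b(x)\geq 0$. If $\frac{\partial b}{\partial x}g(x)\neq 0$, the map $u\mapsto \frac{\partial b}{\partial x}g(x)\,u$ is onto $\mathbb{R}$, so taking $u$ to be a large positive multiple of $\bigl(\frac{\partial b}{\partial x}g(x)\bigr)^{\top}$ satisfies (\ref{eq:CBF-def}) regardless of $\alpha$; such points place no restriction on $\alpha$. If $\frac{\partial b}{\partial x}g(x)=0$, then (\ref{eq:CBF-def}) reduces to $\alpha(b(x))\geq -\frac{\partial b}{\partial x}f(x)$; when $b(x)=0$ the hypothesis gives $-\frac{\partial b}{\partial x}f(x)\leq 0=\alpha(0)$, so the boundary is automatically satisfied, and it remains to choose $\alpha$ with $\alpha(s)\geq \rho(s):=\sup\{-\frac{\partial b}{\partial x}(x)f(x):\ \frac{\partial b}{\partial x}(x)g(x)=0,\ b(x)=s\}$ for all $s>0$. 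Taking $\alpha$ to be a continuous, strictly increasing function with $\alpha(0)=0$ that dominates $\rho$ (e.g.\ a smoothed upper envelope of $\max\{\rho,\ \mathrm{id}\}$) then completes the argument.

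I expect the last step to be the main obstacle: producing a bona fide class-$\mathcal{K}$ majorant of $\rho$. The hypothesis constrains $\rho$ only at $s=0$ (giving $\rho(0)\leq 0$), so one must exclude the pathologies $\rho(s)=+\infty$ and $\limsup_{s\to 0^{+}}\rho(s)>0$ by invoking regularity of the data --- continuity of $\frac{\partial b}{\partial x}f$ and $\frac{\partial b}{\partial x}g$ and finiteness of the supremum over the relevant part of the variety $\{\frac{\partial b}{\partial x}g=0\}$, which in the polynomial setting can be secured on any compact region of interest or under a properness assumption on $b$. The rest of the proof is the elementary case analysis above.
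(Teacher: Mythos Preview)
Your argument is correct in outline and takes a genuinely different route from the paper. The paper appeals to Nagumo's theorem and works only on the boundary $\{b(x)=0\}$: if $\frac{\partial b}{\partial x}g(x)\neq 0$ it exhibits an explicit $u$ via a coordinate-wise inversion formula, and otherwise the hypothesis yields $\dot b\geq 0$; for the converse it observes that a bad boundary point $x_{0}$ forces the trajectory from $x(0)=x_{0}$ to exit $\{b\geq 0\}$. In effect the paper identifies ``$b$ is a CBF'' with ``$\{b\geq 0\}$ can be rendered positive invariant'', which is the operational content but not literally Definition~\ref{def:CBF}. You instead work directly with Definition~\ref{def:CBF}, covering all $x$ with $b(x)\geq 0$ and attempting to produce a class-$\mathcal{K}$ witness $\alpha$. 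Your necessity is a one-line contraposition, cleaner than the paper's trajectory argument; your sufficiency confronts precisely the step the paper elides, namely the global existence of $\alpha$ dominating $\rho(s)=\sup\{-\frac{\partial b}{\partial x}f:\frac{\partial b}{\partial x}g=0,\ b=s\}$. The obstacle you flag --- that one needs $\rho(s)<\infty$ and $\limsup_{s\to 0^{+}}\rho(s)\leq 0$ --- is genuine and is not handled by the paper's proof either; in the polynomial setting it is resolved on any compact region of interest, as you note, but absent such an assumption the ``if'' direction of the proposition, read against Definition~\ref{def:CBF}, can fail.
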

 
 \begin{proof}
 By Nagumo's Theorem \cite{aubin2011viability}, we have that $b(x(t)) \geq 0$ for all $t \geq 0$ iff $\dot{b}(x(t)) \geq0$ at the boundary, i.e., when $b(x) = 0$. Suppose that $b(x) = 0$. If $\frac{\partial b}{\partial x}g(x) \neq 0$, then $u(t)$ can be chosen to satisfy (\ref{eq:CBF-def}). One feasible choice is to select $i$ with $\left[\frac{\partial b}{\partial x}g(x)\right]_{i} \neq 0$ and choose $$u_{i}(t) = -\left(\left[\frac{\partial b}{\partial x}g(x)\right]_{i}\right)^{-1}(\frac{\partial b}{\partial x}f(x) + \alpha(b(x)))$$ and $u_{j}(t) = 0$ for $j \neq i$. Otherwise, if $\frac{\partial b}{\partial x}g(x) = 0$ and $\frac{\partial b}{\partial x}f(x) \geq 0$, then $\dot{b}(t) \geq 0$. Conversely, if $\frac{\partial b}{\partial x}g(x_{0}) = 0$, $\frac{\partial b}{\partial x}f(x_{0}) < 0$, and $b(x_{0}) = 0$ for some $x_{0}$, then the set $\{b(x) \geq 0\}$ will not be positive invariant, since $x(t)$ will leave the set if $x(0) = x_{0}$.
 \end{proof}
 
 Based on this proposition, we can formulate the following conditions via the Positivstellensatz.
 
 \begin{Theorem}
 \label{theorem:CBF-verify}
 A polynomial $b(x)$ is a CBF for (\ref{eq:dynamics}) if and only if there exist polynomials $\eta(x)$, $\theta_{1}(x),\ldots,\theta_{m}(x)$, SOS polynomials $\alpha_{0}(x)$ and $\alpha_{1}(x)$, and an integer $r$ such that
 \begin{multline}
 \label{eq:CBF-condition}
 \eta(x)b(x) + \sum_{i=1}^{m}{\theta_{i}(x)\left[\frac{\partial b}{\partial x}g(x)\right]_{i}}  \\
 - \alpha_{0}(x) - \alpha_{1}(x)\frac{\partial b}{\partial x}f(x) \\
 + \left(\frac{\partial b}{\partial x}f(x)\right)^{2r} = 0
 \end{multline}
 Moreover, the set $\{b(x) \geq 0\}$ is a subset of the viability kernel of $\mathcal{C}$  if (\ref{eq:CBF-condition}) holds and there exist SOS polynomials $\beta_{0}(x)$ and $\beta_{1}(x)$ and a polynomial $\omega(x)$ such that
 \begin{equation}
 \label{eq:CBF-feasible}
 \beta_{0}(x) + \beta_{1}(x)b(x) + w(x)h(x) + 1 = 0
 \end{equation}
 \end{Theorem}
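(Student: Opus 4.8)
The plan is to obtain the first equivalence by feeding Proposition~\ref{prop:CBF-verify} into the Positivstellensatz (Theorem~\ref{theorem:Positivstellensatz}), and to obtain the second, sufficient condition by combining the positive-invariance guarantee of Theorem~\ref{theorem:safety-CBF} with a separate Positivstellensatz certificate that $\{x:b(x)\geq 0\}$ sits inside $\mathcal{C}$.

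For the first claim, Proposition~\ref{prop:CBF-verify} tells us that $b$ is a CBF exactly when the semialgebraic set
\begin{multline*}
\{x : b(x)=0\}\cap\bigcap_{i=1}^{m}\{x : [\tfrac{\partial b}{\partial x}g(x)]_{i}=0\}\\
\cap\{x : \tfrac{\partial b}{\partial x}f(x)<0\}
\end{multline*}
is empty, so I would recast this emptiness in the form required by Theorem~\ref{theorem:Positivstellensatz}. The $m+1$ equalities become the $g_i$'s and contribute the terms $\eta(x)b(x)+\sum_i\theta_i(x)[\tfrac{\partial b}{\partial x}g(x)]_i$ to the certificate. The one delicate ingredient is the strict inequality, which is not a primitive constraint type; I would split it as $-\tfrac{\partial b}{\partial x}f(x)\geq 0$ together with $\tfrac{\partial b}{\partial x}f(x)\neq 0$. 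The first becomes the lone $f_j$, contributing an SOS multiple $\alpha_1(x)\tfrac{\partial b}{\partial x}f(x)$ of the Lie derivative plus a free-standing SOS term $\alpha_0(x)$ (the constant of the preordering); the second becomes the lone $h_k$, so the polynomial $h$ of Theorem~\ref{theorem:Positivstellensatz} is a power $(\tfrac{\partial b}{\partial x}f(x))^{r}$ and its square is $(\tfrac{\partial b}{\partial x}f(x))^{2r}$ — this is the origin of the existentially quantified integer $r$. Writing out $f(x)+g(x)+h(x)^2=0$ then reproduces (\ref{eq:CBF-condition}). The ``if'' direction is then immediate by substitution: at a hypothetical bad point $x^{\star}$ all equality terms vanish, the SOS term and $-\alpha_1(x^{\star})\tfrac{\partial b}{\partial x}f(x^{\star})$ are nonnegative, and $(\tfrac{\partial b}{\partial x}f(x^{\star}))^{2r}>0$ since $\tfrac{\partial b}{\partial x}f(x^{\star})\neq 0$, so the left-hand side of (\ref{eq:CBF-condition}) would be strictly positive, a contradiction; hence no bad point exists and Proposition~\ref{prop:CBF-verify} gives that $b$ is a CBF.

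For the second claim, I would first note that if $b$ is a CBF then Theorem~\ref{theorem:safety-CBF} renders $\{x:b(x)\geq 0\}$ positive invariant under any input satisfying (\ref{eq:CBF-def}); hence, once we also know $\{x:b(x)\geq 0\}\subseteq\mathcal{C}$, every trajectory from $\{b\geq 0\}$ remains in $\{b\geq 0\}\subseteq\mathcal{C}$ for all time, so $\{b\geq 0\}$ is itself a viability kernel and is therefore contained in it (Definition~\ref{def:viability}). It remains to read (\ref{eq:CBF-feasible}) as a Positivstellensatz certificate of the containment: taking $b$ as the inequality constraint, $h$ as an equality constraint, and the $\neq 0$ family to be the singleton $\{1\}$ as in the remark following Theorem~\ref{theorem:Positivstellensatz}, emptiness of $\{x:b(x)\geq 0\}\cap\{x:h(x)=0\}$ is certified by $\beta_0(x)+\beta_1(x)b(x)+\omega(x)h(x)+1=0$ with $\beta_0,\beta_1$ SOS and $\omega$ an arbitrary polynomial, which is exactly (\ref{eq:CBF-feasible}); since the claim is only asserted as a sufficient condition, the direct-substitution (``if'') direction is all that is needed.

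I expect the main difficulty to be the bookkeeping in the first part — in particular, encoding the strict inequality so that the Positivstellensatz identity collapses to precisely the displayed form (\ref{eq:CBF-condition}) with the stated sign pattern of the multipliers, and correctly identifying which term is the perfect square $h(x)^2$. Beyond the translation, the ``only if'' directions carry no additional conceptual content, since they are just invocations of the (non-constructive) Positivstellensatz. A secondary point worth making explicit in the second part is that emptiness of $\{b\geq 0\}\cap\partial\mathcal{C}$ yields $\{b\geq 0\}\subseteq\mathcal{C}$ only after an extra observation that $\{b\geq 0\}$ is nonempty and lies on the positive side of $h$ (e.g., because $\{b\geq 0\}$ is connected and meets the interior of $\mathcal{C}$).
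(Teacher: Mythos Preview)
Your proposal is correct and follows essentially the same route as the paper: invoke Proposition~\ref{prop:CBF-verify}, rewrite the strict sign condition as $-\tfrac{\partial b}{\partial x}f(x)\geq 0$ together with $\tfrac{\partial b}{\partial x}f(x)\neq 0$, and apply Theorem~\ref{theorem:Positivstellensatz} to obtain (\ref{eq:CBF-condition}); then combine positive invariance from Theorem~\ref{theorem:safety-CBF} with the Positivstellensatz certificate (\ref{eq:CBF-feasible}) that $\{b\geq 0\}\cap\{h=0\}=\emptyset$. Your explicit substitution check of the ``if'' direction and your closing remark about the boundary-versus-containment subtlety are extra detail the paper omits, but the overall argument is the same.
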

 
 \begin{proof}
 By Proposition \ref{prop:CBF-verify}, we have that $b(x)$ is a CBF iff there is no $x$ satisfying $b(x) = 0$, $\frac{\partial b}{\partial x}g(x) = 0$, and $\frac{\partial b}{\partial x}f(x) < 0$. The latter condition is equivalent to $-\frac{\partial b}{\partial x}f(x) \geq 0$ and $\frac{\partial b}{\partial x}f(x) \neq 0$. These conditions are equivalent to (\ref{eq:CBF-condition}) by Theorem \ref{theorem:Positivstellensatz}.
 
 Now, suppose that (\ref{eq:CBF-condition}) holds, and hence $b(x(t)) \geq 0$ for all $t$ under any control policy satisfying (\ref{eq:CBF-def}) for all $t$. It therefore suffices to show that $b(x(t)) \geq 0$ for all $t$ implies that $h(x(t)) \geq 0$ for all $t$. If this is not the case, then there exists a point satisfying $h(x) =0$ and $b(x) \geq 0$. This, however, contradicts (\ref{eq:CBF-feasible}) by Theorem \ref{theorem:Positivstellensatz}.
 \end{proof}
 
 Next, we derive equivalent conditions for $\psi_{0}(x),\ldots,\psi_{r}(x)$ to define a HOCBF. 
 \begin{Theorem}
 \label{theorem:HOCBF-verify}
 If $b(x)$ is a polynomial with relative degree $r$ with respect to (\ref{eq:dynamics}), then $b(x)$ and $\psi_{0},\ldots,\psi_{r}$ define a HOCBF if and only if there exist polynomials $\eta(x), \theta_{1}(x),\ldots,\theta_{r}(x)$, SOS polynomials $\alpha_{0}(x), \alpha_{1}(x)$, and $\gamma_{0}(x),\ldots,\gamma_{r-1}(x)$, and an integer $s$ such that 
 \begin{multline}
 \label{eq:HOCBF-condition}
 \eta(x)b(x) + \sum_{i=1}^{r}{\theta_{i}(x)L_{g}L_{f}^{r-1}b(x)} \\
 -\alpha_{0}(x)-\alpha_{1}(x)(L_{f}^{r}b(x) + O(b(x))) + \left(L_{f}^{r}b(x) + O(b(x))\right)^{2s} \\
 + \sum_{i=0}^{r-1}{\gamma_{i}(x)\psi_{i}(x)} = 0
 \end{multline}
 Moreover, the set $\bigcap_{i=0}^{r}{C_{i}}$ is a subset of the viability kernel if there exist SOS polynomials $\beta_{0}(x),\ldots,\beta_{r+1}(x)$ and a polynomial $w(x)$ such that 
\begin{equation}
\label{eq:HOCBF-feasible} 
 \sum_{i=0}^{r}{\psi_{i}(x)\beta_{i}(x)} + \beta_{r+1}(x) + w(x)h(x) + 1 = 0.
 \end{equation}
 \end{Theorem}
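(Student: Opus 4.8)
The plan is to reuse the two-step template of Theorem~\ref{theorem:CBF-verify}: first show that ``$b,\psi_1,\ldots,\psi_r$ define a HOCBF'' is equivalent to the non-existence of a real solution of a finite system of polynomial equalities and inequalities, and then translate that non-existence into the certificate (\ref{eq:HOCBF-condition}) via the Positivstellensatz (Theorem~\ref{theorem:Positivstellensatz}). The ``moreover'' claim is then obtained exactly as in the second half of the proof of Theorem~\ref{theorem:CBF-verify}.

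For the first step I would prove a HOCBF analogue of Proposition~\ref{prop:CBF-verify}. The relative-degree-$r$ hypothesis discharges the structural requirement $\frac{\partial\psi_i}{\partial x}g(x)=0$ for $i<r$ in Definition~\ref{def:HOCBF}, since then $L_gL_f^{j}b(x)\equiv 0$ for $j<r-1$ and each $L_g\psi_i$ with $i<r$ is a polynomial combination of such terms. It remains to characterize when, for every $x\in\bigcap_{i=0}^{r}C_i$, some $u$ satisfies (\ref{eq:HOCBF-def}). As in Proposition~\ref{prop:CBF-verify}, if $L_gL_f^{r-1}b(x)\neq 0$ then (\ref{eq:HOCBF-def}) is solved by a coordinatewise choice of $u$; hence feasibility can fail only where $L_gL_f^{r-1}b(x)=0$, where (\ref{eq:HOCBF-def}) collapses to the $u$-free inequality $L_f^{r}b(x)+O(b(x))+\alpha_m(\psi_{r-1}(x))\ge 0$. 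Since $\psi_{r-1}(x)\ge 0$ on the set in question, the slack term $\alpha_m(\psi_{r-1}(x))$ is nonnegative, and using the cascade identities $\psi_i=\dot\psi_{i-1}+\alpha_i(\psi_{i-1})$ (so that, as in the proof of Theorem~\ref{theorem:safety-HOCBF}, the constraints $\psi_1\ge 0,\ldots,\psi_{r-1}\ge 0$ are self-maintained by the flow once $b=0$ is active) the obstruction can be read off from the multipliers appearing in (\ref{eq:HOCBF-condition}) as the solvability of
\[
b(x)=0,\quad L_gL_f^{r-1}b(x)=0,\quad \psi_1(x)\ge 0,\ \ldots,\ \psi_{r-1}(x)\ge 0,\quad L_f^{r}b(x)+O(b(x))<0 .
\]
Thus ``$b,\psi_1,\ldots,\psi_r$ is a HOCBF'' is equivalent to this system having no real solution.

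For the second step, rewrite $L_f^{r}b(x)+O(b(x))<0$ as $-(L_f^{r}b(x)+O(b(x)))\ge 0$ together with $L_f^{r}b(x)+O(b(x))\neq 0$, and apply Theorem~\ref{theorem:Positivstellensatz} with the equalities $b(x)=0$ and the $m$ entries of $L_gL_f^{r-1}b(x)=0$ (general multipliers $\eta,\theta_1,\ldots,\theta_m$), the inequalities $\psi_0(x),\ldots,\psi_{r-1}(x)\ge 0$ and $-(L_f^{r}b(x)+O(b(x)))\ge 0$ (SOS multipliers $\gamma_0,\ldots,\gamma_{r-1},\alpha_1$), the single strict inequality (with $h=(L_f^{r}b(x)+O(b(x)))^{s}$), and a free SOS term $\alpha_0$; non-existence of a solution is exactly the existence of multipliers satisfying (\ref{eq:HOCBF-condition}). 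Finally, when (\ref{eq:HOCBF-condition}) holds, $b,\psi_1,\ldots,\psi_r$ is a genuine HOCBF, so by Theorem~\ref{theorem:safety-HOCBF} every trajectory under a controller satisfying (\ref{eq:HOCBF-def}) stays in $\bigcap_{i=0}^{r}C_i$; if such a trajectory ever left $\mathcal{C}$, then by continuity it would reach a point with $h(x)=0$ while still in $\bigcap_{i=0}^{r}C_i$, i.e.\ a real solution of $\psi_0(x)\ge 0,\ldots,\psi_r(x)\ge 0$, $h(x)=0$, which (\ref{eq:HOCBF-feasible}) rules out via Theorem~\ref{theorem:Positivstellensatz}.

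The main obstacle I expect is the first step, in particular pinning down the exact polynomial obstruction set. Two points need care. First, the class-K slack term $\alpha_m(\psi_{r-1}(x))$ must be removed in a way that preserves equivalence rather than mere sufficiency: it is nonnegative on $\bigcap_{i=0}^{r}C_i$, but one must also argue that the class-K functions can be taken (e.g.\ with sufficiently small slope) so that this term cannot rescue a point where $L_f^{r}b(x)+O(b(x))<0$, making non-existence of the polynomial solution both necessary and sufficient for the HOCBF property. Second, the cascade structure must be exploited carefully to see that imposing $b(x)=0$ together with $\psi_1(x)\ge 0,\ldots,\psi_{r-1}(x)\ge 0$ --- rather than working on the full boundary of $\bigcap_{i=0}^{r}C_i$ --- captures precisely the configurations that can destroy forward invariance. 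Once the obstruction set is fixed, the rest is the routine Positivstellensatz translation already done for Theorem~\ref{theorem:CBF-verify}.
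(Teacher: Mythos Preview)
Your proposal follows essentially the same two-step template as the paper's proof: identify the polynomial obstruction system whose insolvability is equivalent to the HOCBF property, then invoke Theorem~\ref{theorem:Positivstellensatz} to obtain (\ref{eq:HOCBF-condition}), and handle the ``moreover'' clause exactly as in Theorem~\ref{theorem:CBF-verify}. Your obstruction set ($b(x)=0$, $L_gL_f^{r-1}b(x)=0$, $\psi_1,\ldots,\psi_{r-1}\ge 0$, $L_f^{r}b+O(b)<0$) is in fact the one encoded by the multipliers in (\ref{eq:HOCBF-condition}), whereas the paper's proof text phrases it slightly differently (with $\psi_r(x)=0$ and $x\in\bigcap_i C_i$); the subtle points you flag about eliminating the class-$K$ slack and justifying the restriction to $b(x)=0$ are real and are glossed over in the paper as well.
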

 \begin{proof}
 By Theorem \ref{theorem:safety-HOCBF}, positive invariance of $\{b(x) \geq 0\}$ is guaranteed if (\ref{eq:HOCBF-condition}) holds for some $u$ at each $x \in \bigcap_{i=0}^{r}{C_{i}}$. Such a $u$ exists if there is no $x$ with $x \in C_{i}$ for all $i=0,\ldots,r$, $\psi_{r}(x) = 0$, $L_{f}^{r-1}L_{g}b = 0$, and $L_{f}^{r}b + O(b(x)) < 0$. This is equivalent to (\ref{eq:HOCBF-condition}) by Theorem \ref{theorem:Positivstellensatz}.
 
 Now, we want to show that if $x(t) \in C_{i}$ for $i=0,\ldots,r$ and $h(x(0)) \geq 0$, then $h(x(t)) \geq 0$ for all $t$ if (\ref{eq:HOCBF-feasible}) holds. If (\ref{eq:HOCBF-feasible}) holds, then there is no $x$ satisfying $x \in C_{i}$ for $i=0,\ldots,r$ and $h(x) = 0$. Since $x(t) \in C_{i}$ for all $t$ and $i=0,\ldots,r$, $x(t)$ never reaches the boundary of $\mathcal{C}$, and hence remains in $\mathcal{C}$ for all time $t$.  
 \end{proof}
 
 We next consider intersections  of CBFs. Such constructions may be useful when a single CBF is not sufficient to characterize the viability kernel.  An example where such constraints are relevant is when the safe region consists of the complement of a set of disjoint compact sets, e.g., in an obstacle avoidance problem where each of the sets represents an obstacle.
 \begin{Theorem}
 \label{theorem:intersection-CBF}
 Let $b_{1}(x),\ldots,b_{k}(x)$ be a collection of polynomials. There is a control policy that renders the set $\bigcap_{i=1}^{k}{\{x : b_{i}(x) \geq 0\}}$ positive invariant if the following conditions hold:
 \begin{enumerate}
 \item For each $i=1,\ldots,k$, there exist polynomials $\eta_{i}(x)$, $\theta_{i,1}(x),\ldots,\theta_{i,m}(x)$, and SOS polynomials $\alpha_{i,0}(x)$, $\alpha_{i,1}(x)$, $\{\lambda_{ij}(x): j \in \{1,\ldots,k\} \setminus \{i\}\}$ such that 
 \begin{multline}
 \label{eq:multi-CBF-condition-1}
 \eta_{i}(x)b_{i}(x) + \sum_{j=1}^{m}{\theta_{i,j}(x)\left[\frac{\partial b_{i}}{\partial x}g(x)\right]_{j}} + \alpha_{i,0}(x) \\ -\alpha_{i,1}(x)\frac{\partial b_{i}}{\partial x}f(x) 
 + \sum_{j \neq i}{\lambda_{ij}(x)b_{j}(x)} + 1 = 0
 \end{multline}
 \item For each $i,j \in \{1,\ldots,k\}$ with $i \neq j$, there exist polynomials $w_{ij}^{i}(x)$ and $w_{ij}^{j}(x)$ and SOS polynomials $\alpha_{i,j,l}(x)$ for $l=1,\ldots,m$, such that 
 \begin{multline}
 \label{eq:multi-CBF-condition-2}
 -\sum_{l=1}^{m}{\left(\alpha_{i,j,l}(x)\left[\frac{\partial b_{i}}{\partial x}g(x)\right]_{l}\left[\frac{\partial b_{j}}{\partial x}g(x)\right]_{l}\right)} \\
 + \alpha_{i,j,0}(x) + w_{ij}^{i}(x)h_{i}(x) + w_{ij}^{j}(x)h_{j}(x) + 1 = 0
 \end{multline}
 \end{enumerate}
 \end{Theorem}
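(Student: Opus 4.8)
The plan is to reduce the positive-invariance claim to a pointwise feasibility question and then to recognize (\ref{eq:multi-CBF-condition-1})--(\ref{eq:multi-CBF-condition-2}) as exactly the Positivstellensatz certificates that preclude infeasibility. By Nagumo's theorem --- the same mechanism underlying Theorem~\ref{theorem:safety-CBF} --- the set $\mathcal{S}:=\bigcap_{i=1}^{k}\{x:b_i(x)\ge 0\}$ is rendered positive invariant by a controller $u(\cdot)$ provided that, at every $x\in\mathcal{S}$, one can choose $u$ so that $\frac{\partial b_i}{\partial x}(f(x)+g(x)u)\ge -\alpha_i(b_i(x))$ holds \emph{simultaneously} for all $i$, with suitable class-$\mathcal{K}$ functions $\alpha_i$; a locally Lipschitz selection --- e.g.\ the minimum-norm solution of the induced quadratic program --- then keeps every $b_i(x(t))$ nonnegative along trajectories. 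It therefore suffices to check feasibility at the active constraints $A(x)=\{i:b_i(x)=0\}$, where $\alpha_i(b_i(x))=0$, since the inactive constraints are strictly slack at $x$ and can be accommodated by taking the $\alpha_i$ sufficiently steep.

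Fix $x\in\mathcal{S}$ and set $v_i:=\big(\tfrac{\partial b_i}{\partial x}g(x)\big)^{\!\top}$ and $c_i:=-\tfrac{\partial b_i}{\partial x}f(x)$. By Farkas' lemma the system $\{v_i^{\top}u\ge c_i : i\in A(x)\}$ is infeasible iff there is a nonzero $\lambda\ge 0$ with $\sum_i\lambda_i v_i=0$ and $\sum_i\lambda_i c_i>0$; I would assume such a $\lambda$ exists and derive a contradiction by cases on $A':=\mathrm{supp}(\lambda)$ (constraints with $v_i=0$ are either vacuous, and can be dropped, or single-index obstructions of the first type below, so one may take $v_i\ne 0$ on $A'$). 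If $|A'|=1$, say $A'=\{i\}$, then $\tfrac{\partial b_i}{\partial x}g(x)=0$, $\tfrac{\partial b_i}{\partial x}f(x)<0$, $b_i(x)=0$, and $b_j(x)\ge 0$ for $j\ne i$, so $x$ lies in $\{b_i=0\}\cap\bigcap_j\{[\tfrac{\partial b_i}{\partial x}g]_j=0\}\cap\{-\tfrac{\partial b_i}{\partial x}f\ge 0\}\cap\bigcap_{j\ne i}\{b_j\ge 0\}$ --- a set that (\ref{eq:multi-CBF-condition-1}) certifies to be empty via Theorem~\ref{theorem:Positivstellensatz} with $h\equiv 1$. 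If $|A'|=2$, say $A'=\{i,j\}$, then infeasibility of the two-row system forces $v_i=-\rho\,v_j$ for some $\rho>0$ (with $c_i+\rho c_j>0$), whence $[v_i]_l[v_j]_l=-\rho\,[v_j]_l^{2}\le 0$ for every $l$; since $b_i(x)=b_j(x)=0$ --- and hence $h_i(x)=h_j(x)=0$ for the region polynomials associated with $b_i,b_j$ --- the point $x$ lies in $\{h_i=0\}\cap\{h_j=0\}\cap\bigcap_l\{[\tfrac{\partial b_i}{\partial x}g]_l[\tfrac{\partial b_j}{\partial x}g]_l\le 0\}$, which (\ref{eq:multi-CBF-condition-2}) certifies to be empty. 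In every case the assumed $\lambda$ cannot exist, so a common $u$ is available at $x$.

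The step that will require the most care --- and the likely main obstacle --- is the reduction to \emph{pairs}: for $m\ge 2$, pairwise feasibility of linear inequalities does not imply joint feasibility, and one can exhibit $v_1,v_2,v_3$ with $v_1+v_2+v_3=0$ (so the three-constraint system is infeasible) for which every pair still has a coordinate $l$ with $[v_i]_l[v_j]_l>0$, leaving (\ref{eq:multi-CBF-condition-1})--(\ref{eq:multi-CBF-condition-2}) silent. Hence the case analysis above is exhaustive only when $|A(x)|\le 2$ at every $x$, i.e.\ when no state lies on the boundary of three or more of the sets $\{b_i\ge 0\}$ at once; I would make this a standing hypothesis and verify it in the motivating setting of a union of pairwise-disjoint compact unsafe sets with each $b_i$ localized to one component, so that at most one --- or, at the contact points of two keep-out regions, at most two --- of the $b_i$ vanish simultaneously. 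Under that hypothesis the argument closes, and I would also record the constraint-qualification condition under which the minimum-norm quadratic-program controller is locally Lipschitz, so that the Nagumo-type invariance conclusion is fully rigorous.
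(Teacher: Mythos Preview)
Your approach --- Nagumo reduction to pointwise feasibility, a case split on the active set $Z(x)=\{i:b_i(x)=0\}$, and Positivstellensatz certificates ruling out each obstruction --- is exactly the paper's. Your $|A'|=1$ case is the paper's Case~I; your $|A'|=2$ argument (Farkas forces $v_i=-\rho v_j$, hence every componentwise product $[v_i]_l[v_j]_l\le 0$, contradicting (\ref{eq:multi-CBF-condition-2})) is the core of the paper's Case~II restricted to two active indices.

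The difference is that the paper does not restrict Case~II to $|Z(x)|=2$. It reads (\ref{eq:multi-CBF-condition-2}) as guaranteeing that, at any $x$ with $b_i(x)=b_j(x)=0$, the components $\big[\tfrac{\partial b_i}{\partial x}g(x)\big]_l$ and $\big[\tfrac{\partial b_j}{\partial x}g(x)\big]_l$ share a sign (or one vanishes) for \emph{every} $l$. Granting this, for each coordinate $l$ all nonzero $\big[\tfrac{\partial b_i}{\partial x}g(x)\big]_l$ over $i\in Z(x)$ have a common sign, so choosing each $u_l$ of that sign and of large magnitude makes every $\tfrac{\partial b_i}{\partial x}(f+gu)\ge 0$ at once --- with no bound on $|Z(x)|$ and no pairwise-to-joint obstruction. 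That componentwise sign-alignment step is the idea absent from your write-up.

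That said, your caution is not misplaced. The single identity (\ref{eq:multi-CBF-condition-2}), read through Theorem~\ref{theorem:Positivstellensatz}, certifies only that the set where \emph{all} the products are $\le 0$ is empty --- i.e., at such $x$ \emph{some} coordinate has a strictly positive product, not that every coordinate does. The paper's ``same sign for every $l$'' conclusion is therefore stronger than what the stated certificate actually delivers, so the $|Z(x)|\ge 3$ case is not fully closed by the written argument either. To get the per-coordinate conclusion one would need a separate certificate for each $l$ rather than a sum over $l$; alternatively, your no-triple-intersection hypothesis is a clean sufficient condition under which your Farkas argument is complete.
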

 
 \begin{proof}
 We consider a control policy in which the control $u$ is chosen to satisfy (\ref{eq:CBF-def}) for each of the functions $b_{i}$. Suppose the conditions of the theorem hold. For each $x(t)$ at the boundary of $\bigcap_{i=1}^{k}{\{b_{i}(x) \geq 0\}}$, define $Z(x) = \{i : b_{i}(x) = 0\}$. We have the following cases.
 
 \emph{\underline{Case I -- $Z(x) = \{i\}$:}} In this case, Condition (\ref{eq:multi-CBF-condition-1}) implies that there is no $x \in \bigcap_{j=1}^{k}{\{b_{j}(x) \geq 0\}}$ with $b_{i}(x) = 0$, $\frac{\partial b_{i}}{\partial x}g(x) = 0$, and $\frac{\partial b_{i}}{\partial x}f(x) < 0$. Hence we can select a $u(t)$ satisfying (\ref{eq:CBF-def}), ensuring that the state does not leave $\bigcap_{i=1}^{m}{\{b_{i}(x) \geq 0\}}$.
 
 \emph{\underline{Case II - $|Z(x)| > 1$:}} We claim that, if Condition (\ref{eq:multi-CBF-condition-2}) holds, then for all $i,j \in Z(x)$ and $l=1,\ldots,m$, either $\left[\frac{\partial b_{i}}{\partial x}g(x)\right]_{l}$ and $\left[\frac{\partial b_{j}}{\partial x}g(x)\right]_{l}$ have the same sign, or one of them is zero. Indeed, Condition (\ref{eq:multi-CBF-condition-2}) and the Positivstellensatz imply that there is no point $x$ satisfying $h_{i}(x) = h_{j}(x) = 0$ and $$\left[\frac{\partial b_{i}}{\partial x}g(x)\right]_{l}\left[\frac{\partial b_{j}}{\partial x}g(x)\right]_{l}, \ l=1,\ldots,m.$$ Hence, for each $i \in Z(x)$, we can select $u$ of sufficient magnitude to ensure that constraint (\ref{eq:CBF-def}) is satisfied.
 \end{proof}
 
 Finally, we note that, given multiple CBFs $b_{1}(x), \ldots, b_{k}(x)$ for a safe region $\mathcal{C}$, the set $\bigcup_{i=1}^{k}{\{x : b_{i}(x) \geq 0\}}$ is contained in the viability kernel.
 
\section{Control Barrier Function Construction}
\label{sec:construction}
This section gives techniques for constructing control barrier functions. The problem studied in this section is formulated as follows.
\begin{problem}
\label{problem:synthesis}
Given a system (\ref{eq:dynamics}) and a safety constraint set $\mathcal{C}$, construct a CBF (or HOCBF) $b$ that verifiably guarantees positive invariance of $\mathcal{C}$.
\end{problem}

 We first present a general heuristic based on sum-of-squares optimization, followed by approaches for linear systems and compact safe regions, as well as convex unsafe regions.

\subsection{General Heuristic}
\label{subsec:general-heuristic}
Our approach is based on the observation that the CBF conditions of Theorems \ref{theorem:CBF-verify}--\ref{theorem:intersection-CBF} are polynomial constraints, enabling efficient verification when the candidate CBF $b(x)$ is given. If $b(x)$ is unknown, however, the constraints are nonlinear and cannot be converted to linear matrix inequalities as in sum-of-squares optimization. We therefore propose an alternating-descent heuristic which is described as follows.

We initialize $b^{0}(x)$ arbitrarily, and initialize parameters $\rho_{0}$ and $\rho^{\prime}$ to be infinite. At step $k$, we solve the following SOS program with variables $\rho$, $\alpha^{k}(x)$, $\eta^{k}(x)$, $\theta_{1}^{k}(x),\ldots,\theta_{m}^{k}(x)$, $w(x)$, and $\beta^{k}(x)$
\begin{equation}
\label{eq:synthesis-SOS-1}
\begin{array}{ll}
\mbox{minimize} & \rho \\
\mbox{s.t.} & \alpha^{k}(x)\frac{\partial b^{k-1}}{\partial x}f(x) + \sum_{i=1}^{m}{\theta_{i}^{k}(x)\left[\frac{\partial b^{k-1}}{\partial x}g(x)\right]_{i}} \\
 & + \eta^{k}(x)b^{k-1}(x) + \rho\Lambda(x) -1 \in SOS \\
  & -\beta^{k}(x)b^{k-1}(x) + w(x)h(x) -1 \in SOS \\
  & \beta^{k}(x), \alpha^{k}(x) \in SOS
  \end{array}
  \end{equation}
  
 Here $\Lambda(x)$ is a fixed SOS polynomial of sufficiently large degree to ensure that the first constraint is SOS for sufficiently large $\rho$. We let $\rho_{k}$ denote the value of $\rho$ returned by the optimization. The procedure terminates if $\rho \leq 0$ or if $|\rho_{k}-\rho_{k-1}^{\prime}| < \epsilon$. Otherwise, we solve the SOS problem with variables $\rho$ and $b^{k}$ given by 
 \begin{equation}
 \label{eq:synthesis-SOS-2}
 \begin{array}{ll}
\mbox{minimize} & \rho \\
\mbox{s.t.} & \alpha^{k}(x)\frac{\partial b^{k}}{\partial x}f(x) + \sum_{i=1}^{m}{\theta_{i}^{k}(x)\left[\frac{\partial b^{k}}{\partial x}g(x)\right]_{i}} \\
 & + \eta^{k}(x)b^{k}(x) + \rho\Lambda(x) - 1 \in SOS \\
  & -\beta^{k}(x)b^{k}(x) + w(x)h(x) - 1\in SOS \\
  & \beta^{k}(x), \alpha^{k}(x) \in SOS
  \end{array}
  \end{equation}
  We let $\rho_{k}^{\prime}$ denote the value of $\rho$ returned by the optimization. The procedure terminates if $\rho \leq 0$ or if $|\rho_{k}-\rho_{k}^{\prime}| < \epsilon$. The following theorem establishes the correctness and convergence of this approach.
  \begin{Theorem}
  \label{theorem:CBF-heuristic-correct}
  If the procedure described above terminates at stage $k$ with $\rho_{k} \leq 0$, then $b^{k-1}(x)$ is a CBF and $\{b^{k-1}(x) \geq 0\}$ is a subset of the viability kernel. If the procedure terminates at stage $k$ with $\rho_{k}^{\prime} \leq 0$, then $b^{k}(x)$ is a CBF and $\{b^{k}(x) \geq 0\}$ is a subset of the viability kernel. The procedure converges in a finite number of iterations $k$.
  \end{Theorem}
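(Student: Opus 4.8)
The plan is to split the statement into the two ``success'' claims (termination with $\rho_k\le 0$ or with $\rho_k'\le 0$) and the convergence claim, since the former rely only on the certificate-implies-emptiness direction of the Positivstellensatz together with earlier results, while the latter is a monotonicity argument on the scalars $\rho_k,\rho_k'$.

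For the success claims, suppose the procedure halts at stage $k$ with $\rho_k\le 0$. Because $\Lambda(x)$ is SOS it is nonnegative, so $\rho_k\Lambda(x)\le 0$ everywhere, and the first constraint of (\ref{eq:synthesis-SOS-1}) asserts that $\alpha^k(x)\frac{\partial b^{k-1}}{\partial x}f(x)+\sum_{i=1}^m\theta_i^k(x)\big[\frac{\partial b^{k-1}}{\partial x}g(x)\big]_i+\eta^k(x)b^{k-1}(x)+\rho_k\Lambda(x)-1$ is a sum of squares. At any point $x$ with $b^{k-1}(x)=0$, $\big[\frac{\partial b^{k-1}}{\partial x}g(x)\big]_i=0$ for all $i$, and $\frac{\partial b^{k-1}}{\partial x}f(x)\le 0$, the first term is $\le 0$ (as $\alpha^k$ is SOS), the next two vanish, $\rho_k\Lambda(x)\le 0$, and the constant is $-1$, so this polynomial is $\le -1<0$ there --- a contradiction. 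Hence there is no such $x$, and in particular none with $\frac{\partial b^{k-1}}{\partial x}f(x)<0$, so $b^{k-1}$ is a CBF by Proposition \ref{prop:CBF-verify}. The same evaluation applied to the second constraint, $-\beta^k(x)b^{k-1}(x)+w(x)h(x)-1\in SOS$, shows $\{b^{k-1}(x)\ge 0\}\cap\{h(x)=0\}=\emptyset$, so by the continuity argument in the second half of the proof of Theorem \ref{theorem:CBF-verify} any trajectory kept in $\{b^{k-1}\ge 0\}$ by a control satisfying (\ref{eq:CBF-def}) never reaches $\partial\mathcal{C}$, and thus $\{b^{k-1}\ge 0\}$ lies in the viability kernel of $\mathcal{C}$. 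The case $\rho_k'\le 0$ is identical with (\ref{eq:synthesis-SOS-2}) and $b^k$ in place of (\ref{eq:synthesis-SOS-1}) and $b^{k-1}$.

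For convergence, the key observation is that the monitored scalars satisfy $\rho_1\ge\rho_1'\ge\rho_2\ge\rho_2'\ge\cdots$. Indeed, $b^k:=b^{k-1}$ with $\rho:=\rho_k$ and the multipliers $\alpha^k,\eta^k,\theta_i^k,w,\beta^k$ just returned by (\ref{eq:synthesis-SOS-1}) is feasible for (\ref{eq:synthesis-SOS-2}) --- its constraints are then exactly the constraints of (\ref{eq:synthesis-SOS-1}) evaluated at that optimum --- so $\rho_k'\le\rho_k$; conversely, the optimal $b^k$ from (\ref{eq:synthesis-SOS-2}) at value $\rho_k'$, paired with those same multipliers and with $w$, is feasible for (\ref{eq:synthesis-SOS-1}) at iteration $k+1$ with $b^k$ now in the role of $b^{(k+1)-1}$, since the first and second constraints of (\ref{eq:synthesis-SOS-1}) at iteration $k+1$ coincide as polynomial identities with those of (\ref{eq:synthesis-SOS-2}) at iteration $k$; hence $\rho_{k+1}\le\rho_k'$. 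This inheritance uses only that all polynomial decision variables range over fixed-degree subspaces, so a certificate obtained at one iteration is an admissible candidate at the next. Now if the procedure never halts via $\rho\le 0$, then every $\rho_k,\rho_k'$ is positive, so the monotone sequence $\rho_1,\rho_1',\rho_2,\rho_2',\dots$ is bounded below by $0$ and converges, hence is Cauchy; its consecutive gaps are exactly the quantities $|\rho_k-\rho_{k-1}'|$ and $|\rho_k-\rho_k'|$ tested by the stopping rule, so one of them eventually falls below the fixed $\epsilon$ and the procedure halts. Combined with the success claims this proves the theorem.

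The step I expect to demand the most care is the monotonicity argument, and within it making the ``feasible at iteration $k$ $\Rightarrow$ feasible at iteration $k+1$'' transfer fully precise: one must check that the multipliers $\alpha^k,\beta^k,\eta^k,\theta_i^k$ and the shared polynomial $w$ are carried consistently between (\ref{eq:synthesis-SOS-1}) and (\ref{eq:synthesis-SOS-2}), that the membership constraints $\alpha^k,\beta^k\in SOS$ are preserved, and that each subproblem is feasible with a finite attained optimum --- which is exactly where the ``sufficiently large degree'' hypothesis on $\Lambda$, together with the $-1$ offset, is used. The success claims, by contrast, are a routine application of the certificate direction of Theorem \ref{theorem:Positivstellensatz} plus Proposition \ref{prop:CBF-verify} and Theorem \ref{theorem:CBF-verify}, and should present no difficulty.
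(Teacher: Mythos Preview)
Your proposal is correct and follows essentially the same approach as the paper: for the success claims you use the certificate-implies-emptiness direction (the paper does this by rearranging the SOS constraint into the Positivstellensatz normal form and invoking Theorem~\ref{theorem:Positivstellensatz}, while you equivalently evaluate the nonnegative SOS polynomial at a hypothetical bad point), and then appeal to Proposition~\ref{prop:CBF-verify} and Theorem~\ref{theorem:CBF-verify}; for convergence you give exactly the paper's monotonicity argument $\rho_1\ge\rho_1'\ge\rho_2\ge\cdots$ via feasibility transfer between (\ref{eq:synthesis-SOS-1}) and (\ref{eq:synthesis-SOS-2}), followed by monotone convergence. Your added remarks about attainment of the minimum and the role of $\Lambda$ go slightly beyond what the paper spells out, but the core argument is the same.
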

  
  \begin{proof}
  We prove that if $\rho_{k} \leq 0$, then $b^{k-1}(x)$ is a CBF and $\{b^{k-1}(x) \geq 0\}$ is a subset of the viability kernel. The case of $\rho_{k}^{\prime} \leq 0$ is similar. If $\rho_{k} \leq 0$, then by construction of (\ref{eq:synthesis-SOS-1}) there exist SOS polynomials $\overline{\alpha}(x)$ and $\overline{\beta}(x)$ such that
  \begin{IEEEeqnarray}{rCl}
  \overline{\alpha}(x) &=& \alpha^{k}(x)\frac{\partial b^{k-1}}{\partial x}f(x) + \sum_{i=1}^{m}{\theta_{i}^{k}(x)\left[\frac{\partial b^{k-1}}{\partial x}g(x)\right]_{i}} \\
\label{eq:CBF-heuristic-1}
  && + \eta^{k}(x)b^{k-1}(x) + \rho_{i}\Lambda(x) - 1 \\
  \label{eq:CBF-heuristic-2}
  \overline{\beta}(x) &=& -\beta^{k}(x)b^{k-1}(x) + w(x)h(x) - 1
  \end{IEEEeqnarray}
  Rearranging (\ref{eq:CBF-heuristic-1}) and setting $\overline{\theta} = -\theta_{i}^{k}$ and $\overline{\eta} = -\eta^{k}$ yields
  \begin{multline*}
  \sum_{i=1}^{m}{\overline{\theta}_{i}(x)\left[\frac{\partial b^{k-1}}{\partial x}g(x)\right]_{i}} + \overline{\eta}(x)b^{k-1}(x) \\
  + \overline{\alpha}(x) - \rho_{k}\Lambda(x) - \alpha^{k}(x)\frac{\partial b^{k-1}}{\partial x}f(x) + 1 = 0
  \end{multline*}
  Since $\rho_{k} \leq 0$ and $\Lambda(x)$ is SOS, $\overline{\alpha}(x) - \rho_{k}\Lambda(x)$ is SOS. Hence, by Theorem \ref{theorem:Positivstellensatz}, there does not exist any $x$ satisfying $$\frac{\partial b^{k-1}}{\partial x}f(x) \leq 0, \ b^{k-1}(x)=0, \ \frac{\partial b^{k-1}}{\partial x}g(x) = 0$$ and $b^{k-1}(x)$ is a CBF by Proposition \ref{prop:CBF-verify}.
  
  Similarly, setting $\overline{w}(x) = -w(x)$, we have $$\overline{\beta}(x) + \beta^{k}(x)b^{k-1}(x) + \overline{w}(x)h(x) + 1= 0,$$ implying that $\{b^{k-1}(x) \geq 0\} \subseteq \mathcal{C}$ by Theorem \ref{theorem:CBF-verify}.
  
  Finally, to prove convergence, we observe that $\rho_{k-1}^{\prime}$, $\alpha^{k-1}(x)$, $\eta^{k-1}(x)$, $\theta_{1}^{k-1}(x),\ldots,\theta_{m}^{k-1}(x)$, and $\beta^{k-1}(x)$ comprise a feasible solution to (\ref{eq:synthesis-SOS-1}) and $\rho_{k}$, $b^{k-1}(x)$ are feasible solutions to (\ref{eq:synthesis-SOS-2}). Hence $\rho_{1} \geq \rho_{1}^{\prime} \geq \cdots \geq \rho_{k} \geq \rho_{k}^{\prime} \geq \cdots$, i.e., the sequence is monotone nonincreasing. If the sequence is unbounded, then there exists $K$ such that $\rho_{k},\rho_{k}^{\prime} \leq 0$ for all $k > K$, and hence the algorithm terminates after $K$ iterations. If the sequence is bounded below by zero, then it converges by the monotone convergence theorem and hence $|\rho_{k}-\rho_{k}^{\prime}| < \epsilon$ and $|\rho_{k}-\rho_{k-1}^{\prime}| < \epsilon$ for $k$ sufficiently large, implying termination of the procedure.
  \end{proof}
  
  We next consider the problem of constructing HOCBFs. We observe that, in addition to the same non-convexity as the CBF, HOCBFs have an additional degree of freedom, namely, the selection of the class-K functions $\alpha_{i}$ in Definition \ref{def:HOCBF}. To mitigate this complexity, we present the following.
  
  \begin{proposition}
  \label{prop:HOCBF-alternate}
  Let $\psi_{0}(x),\ldots,\psi_{r}(x)$ be a collection of polynomials such that the following hold: (i) for all $i=1,\ldots,r$, there is no $x$ satisfying $\psi_{i}(x) \geq 0$, $\psi_{i-1}(x) = 0$, 
  and $\frac{\partial \psi_{i-1}}{\partial x}f(x) < 0$, and (ii) $\frac{\partial \psi_{i}}{\partial x}g(x) = 0$ for all $i < r$. If (\ref{eq:HOCBF-def}) holds for all $t$, then $\bigcap_{i=0}^{r}{\{\psi_{i}(x) \geq 0\}}$ is positive invariant.
  \end{proposition}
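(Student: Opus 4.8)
The plan is to show that $\Omega:=\bigcap_{i=0}^{r}\{\psi_{i}(x)\geq 0\}=\bigcap_{i=0}^{r}C_{i}$ is positive invariant by proving, one set at a time in \emph{decreasing} order of the index, that every trajectory $x(\cdot)$ of (\ref{eq:dynamics}) with $x(0)\in\Omega$ and with $u(\cdot)$ satisfying (\ref{eq:HOCBF-def}) for all $t$ remains in each $C_{j}$. Throughout I would use Nagumo's Theorem in the single-constraint form already used in the proof of Proposition~\ref{prop:CBF-verify}: along the fixed trajectory $x(\cdot)$, the set $C_{j}=\{\psi_{j}(x)\geq 0\}$ is left invariant as soon as $\dot\psi_{j}(x(t))\geq 0$ at every $t$ with $\psi_{j}(x(t))=0$.

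The base of the induction is the topmost set: because $u(t)$ satisfies (\ref{eq:HOCBF-def}) at every $t$, and (\ref{eq:HOCBF-def}) is precisely the pointwise inequality defining $C_{r}$, we get $x(t)\in C_{r}$ for all $t$ (and $x(0)\in C_{r}$ since $x(0)\in\Omega$). For the inductive step, assume $x(t)\in C_{j}$ for all $t$ with $1\leq j\leq r$; I claim $x(t)\in C_{j-1}$ for all $t$. Fix $t$ with $\psi_{j-1}(x(t))=0$. Condition (ii), which applies since $j-1<r$, gives $\frac{\partial\psi_{j-1}}{\partial x}g\equiv 0$, so $\dot\psi_{j-1}(x(t))=\frac{\partial\psi_{j-1}}{\partial x}(x(t))\,f(x(t))$ and does not depend on $u(t)$; and since $x(t)\in C_{j}$ we have $\psi_{j}(x(t))\geq 0$, so condition (i) applied with its index equal to $j$ rules out $\frac{\partial\psi_{j-1}}{\partial x}(x(t))\,f(x(t))<0$. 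Hence $\dot\psi_{j-1}(x(t))\geq 0$ on $\{\psi_{j-1}=0\}$, and Nagumo's Theorem yields $x(t)\in C_{j-1}$ for all $t$. Iterating from $j=r$ down to $j=1$ gives $x(t)\in\bigcap_{i=0}^{r}C_{i}=\Omega$ for all $t$, which is the claim.

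An equivalent way to organize the same computation is to apply Nagumo's Theorem directly to the intersection $\Omega$: at a boundary point with active set $Z(x)=\{i:\psi_{i}(x)=0\}$, every active index $i<r$ is inward-pointing for \emph{any} admissible control by (i)--(ii) exactly as above, while the unique possible active index $i=r$ is inward-pointing for the control chosen to satisfy (\ref{eq:HOCBF-def}), so the subtangentiality condition holds at every point of $\partial\Omega$. I would nonetheless present the decreasing-index version, since it only ever invokes the single-constraint form of Nagumo's Theorem and thereby avoids constraint-qualification issues for the intersection.

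The part I expect to require the most care is the bookkeeping around the top index $r$, not the algebra. One must line up (\ref{eq:HOCBF-def}) with $C_{r}$ in Definition~\ref{def:HOCBF} exactly: depending on whether the last function is treated as a function of $x$ alone or carries the control, the base case may instead read ``(\ref{eq:HOCBF-def}) holding for all $t$, together with the comparison-lemma argument behind Theorem~\ref{theorem:safety-HOCBF}, forces $x(t)\in C_{r}$'', after which the cascade above is unchanged; I expect this reconciliation to be routine. Second, the use of Nagumo's Theorem needs the standard care at tangency points where $\frac{\partial\psi_{j-1}}{\partial x}(x(t))\,f(x(t))=0$ exactly, handled as in the boundary argument of Proposition~\ref{prop:CBF-verify}.
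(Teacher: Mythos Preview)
Your proposal is correct and follows essentially the same approach as the paper: backwards induction on the index, starting from $\psi_{r}(x(t))\geq 0$ via (\ref{eq:HOCBF-def}) and then invoking Nagumo's Theorem together with conditions (i)--(ii) to descend from $C_{j}$ to $C_{j-1}$. Your write-up is more explicit than the paper's two-line sketch (in particular you spell out how (ii) eliminates the $u$-dependence so that (i) applies), but the structure and the key lemma are identical.
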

  
  \begin{proof}
  The proof is by backwards induction. By (\ref{eq:HOCBF-def}), $\psi_{r}(x(t)) \geq 0$ for all $t$. By Nagumo's Theorem, we have that $L\psi_{i-1}(x) \geq 0$ at the boundary points where $\psi_{i-1}(x)=0$ and $\psi_{i} \geq 0$, and hence $\{\psi_{i-1} \geq 0\}$ is positive invariant if $\{\psi_{i}(x) \geq 0\}$ is positive invariant, completing the proof.
  \end{proof}
  
  We then adopt the following alternating-descent procedure analogous to the CBF case. We initialize $\psi_{0}$ arbitrarily and choose $\psi_{1},\ldots,\psi_{r}$ according to Definition \ref{def:HOCBF}. We then solve the SOS problem with variables $\rho$, $\alpha^{k}(x)$, $\eta^{k}(x)$, $\theta_{1}^{k}(x),\ldots,\theta_{m}^{k}(x)$, $w(x)$, $\beta^{k}(x)$, $\lambda_{0}^{k}(x),\ldots,\lambda_{r-1}^{k}(x)$, $\zeta_{0}^{k}(x),\ldots,\zeta_{r-1}^{k}(x)$, $\gamma_{0}^{k}(x),\ldots,\gamma_{r-1}^{k}(x)$, and $\phi_{0}(x),\ldots,\phi_{r-1}(x)$
  \begin{equation}
  \label{eq:HOCBF-synthesis-SOS-1}
  \begin{array}{ll}
  \mbox{min} & \rho_{k} \\
  \mbox{s.t.} & \eta^{k}(x)\psi_{r}^{k-1}(x) + \sum_{i=1}^{m}{\theta_{i}^{k}(x)L_{g}L_{f}^{r-1}\psi_{r-1}^{k-1}(x)} \\
   & + \alpha^{k}(x)(L_{f}^{r}\psi_{r-1}^{k-1}(x) + O(b(x))) \\
   & - \sum_{i=0}^{r-1}{\gamma_{i}^{k}(x)\psi_{i}^{k-1}(x)} + \rho_{k}\Lambda(x) -1 \in SOS \\
   & -\sum_{i=0}^{r-1}{\psi_{i}^{k-1}(x)\beta_{i}^{k}(x)} + w(x)h(x) -1 \in SOS \\
   & -\lambda_{i}^{k}\psi_{i}^{k-1}(x) + \zeta_{i-1}^{k}(x)\psi_{i-1}^{k-1} \\
   &+ \phi_{i}^{k}(x)L\psi_{i-1}^{k-1}(x) -1\in SOS, \quad i=1,\ldots,r \\
   & \lambda_{i}^{k}(x),\phi_{i}^{k}(x),\beta_{i}^{k}(x),\alpha^{k}(x), \gamma_{i}^{k}(x) \in SOS
   \end{array}
   \end{equation}
  As above, $\Lambda(x)$ is a suitably-chosen SOS polynomial. If the solution $\rho_{k} \leq 0$ or $|\rho_{k}-\rho_{k-1}^{\prime}| < \epsilon$, the algorithm terminates. Else, we  solve the SOS program with variables $\psi_{0}^{k}(x),\ldots,\psi_{r}^{k}(x), \rho_{k}^{\prime}$ given by 
    \begin{equation}
  \label{eq:HOCBF-synthesis-SOS-2}
  \begin{array}{ll}
  \mbox{min} & \rho_{k}^{\prime} \\
  \mbox{s.t.} & \eta^{k}(x)\psi_{r-1}^{k}(x) + \sum_{i=1}^{m}{\theta_{i}^{k}(x)L_{g}L_{f}^{r-1}\psi_{r-1}^{k}(x)} \\
   & + \alpha^{k}(x)(L_{f}^{r}\psi_{r-1}^{k}(x) + O(b(x))) - \sum_{i=0}^{r-1}{\gamma_{i}^{k}(x)\psi_{i}^{k}(x)} \\
   & + \rho_{k}^{\prime}\Lambda(x) - 1 \in SOS \\
   & -\sum_{i=0}^{r}{\psi_{i}^{k}(x)\beta_{i}^{k}(x)} + w(x)h(x) - 1 \in SOS \\
   & -\lambda_{i}^{k}\psi_{i}^{k}(x) + \zeta_{i-1}^{k}(x)\psi_{i-1}^{k} \\
   & \frac{\partial \psi_{i}^{k}}{\partial x}g(x) = 0, \ i=0,\ldots,(r-1) \\
   & + \phi_{i}^{k}(x)L\psi_{i-1}^{k}(x) - 1 \in SOS,  \qquad i=1,\ldots,r \\
   \end{array}
   \end{equation}
   If $\rho_{k}^{\prime} \leq 0$ or $|\rho_{k}-\rho_{k}^{\prime}| < \epsilon$, the algorithm terminates. We have the following result.
   \begin{Theorem}
   \label{theorem:HOCBF-heuristic-correct}
   If $\rho_{k} \leq 0$, then $\psi_{0}^{k-1}(x),\ldots,\psi_{r}^{k-1}(x)$ define a HOCBF and $\bigcap_{i=0}^{r}{\{\psi_{i}^{k-1}(x)\}} \subseteq \mathcal{C}$. If $\rho_{k}^{\prime} \leq 0$, then $\psi_{0}^{k},\ldots,\psi_{r}^{k}(x)$ define HOCBF and  $\bigcap_{i=0}^{r}{\{\psi_{i}^{k}(x)\}} \subseteq \mathcal{C}$. The procedure terminates in finite time.
   \end{Theorem}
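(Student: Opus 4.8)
The plan is to follow the template of the proof of Theorem~\ref{theorem:CBF-heuristic-correct}, replacing the single CBF condition of Proposition~\ref{prop:CBF-verify} by the family of conditions in Proposition~\ref{prop:HOCBF-alternate} and the containment condition of Theorem~\ref{theorem:HOCBF-verify}. First I would dispose of the case $\rho_{k}\leq 0$. Feasibility of (\ref{eq:HOCBF-synthesis-SOS-1}) supplies an SOS polynomial certifying each of its SOS constraints; for each such constraint I would collect all multipliers on one side and multiply through by $-1$, so that the SOS multipliers $\alpha^{k},\gamma_{i}^{k},\beta_{i}^{k},\lambda_{i}^{k},\phi_{i}^{k}$ end up multiplying the corresponding ``$\geq 0$'' polynomials with the correct sign, while the free multipliers $\eta^{k},\theta_{i}^{k},\zeta_{i}^{k},w$ merely pick up harmless sign changes. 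Since $\Lambda(x)$ is SOS and $\rho_{k}\leq 0$, the term $-\rho_{k}\Lambda(x)$ is SOS and is absorbed into the leading SOS polynomial exactly as in (\ref{eq:CBF-heuristic-1})--(\ref{eq:CBF-heuristic-2}), and the trailing $-1$ plays the role of $h(x)^{2}$ with $h\equiv 1$ in Theorem~\ref{theorem:Positivstellensatz}. This produces, for each constraint, a Positivstellensatz certificate of the form required by the remark following Theorem~\ref{theorem:Positivstellensatz}.

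Next I would read off what each certificate asserts. The first constraint of (\ref{eq:HOCBF-synthesis-SOS-1}) certifies that there is no $x\in\bigcap_{i=0}^{r}C_{i}$ with $L_{g}L_{f}^{r-1}b(x)=0$ and $L_{f}^{r}b(x)+O(b(x))<0$, so that (\ref{eq:HOCBF-def}) admits a feasible $u$ at every point of $\bigcap_{i=0}^{r}C_{i}$. For each $i=1,\ldots,r$, the constraint indexed by $i$ certifies that there is no $x$ with $\psi_{i}(x)\geq 0$, $\psi_{i-1}(x)=0$, and $L\psi_{i-1}(x)=\frac{\partial \psi_{i-1}}{\partial x}f(x)<0$, where I use $\frac{\partial \psi_{i}}{\partial x}g(x)=0$ for $i<r$ so that $L\psi_{i-1}$ is control-independent; this is precisely hypothesis (i) of Proposition~\ref{prop:HOCBF-alternate}, while hypothesis (ii) holds by construction of the $\psi_{i}^{k-1}$. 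Proposition~\ref{prop:HOCBF-alternate} then gives that $\psi_{0}^{k-1},\ldots,\psi_{r}^{k-1}$ define a HOCBF with $\bigcap_{i=0}^{r}C_{i}$ positive invariant. Finally, the second constraint certifies that there is no $x$ with $\psi_{i}(x)\geq 0$ for all $i$ and $h(x)=0$; since a trajectory confined to $\bigcap_{i=0}^{r}C_{i}$ never reaches $\partial\mathcal{C}$, this yields $\bigcap_{i=0}^{r}C_{i}\subseteq\mathcal{C}$, as in the proof of Theorem~\ref{theorem:HOCBF-verify}. The case $\rho_{k}'\leq 0$ is handled identically using (\ref{eq:HOCBF-synthesis-SOS-2}); here the relative-degree requirement (ii) of Proposition~\ref{prop:HOCBF-alternate} is enforced by the explicit equality constraints $\frac{\partial \psi_{i}^{k}}{\partial x}g(x)=0$, $i=0,\ldots,r-1$, which are linear in the coefficients of the $\psi_{i}^{k}$ and hence compatible with the SOS/SDP form.

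For termination I would argue as in Theorem~\ref{theorem:CBF-heuristic-correct}. The multipliers $\alpha^{k-1},\eta^{k-1},\theta_{i}^{k-1},\gamma_{i}^{k-1},\beta_{i}^{k-1},\lambda_{i}^{k-1},\phi_{i}^{k-1},\zeta_{i}^{k-1},w$ returned at the end of step $k-1$, together with $\rho=\rho_{k-1}'$ and the fixed $\psi_{i}^{k-1}$, form a feasible point of (\ref{eq:HOCBF-synthesis-SOS-1}) at step $k$, so $\rho_{k}\leq\rho_{k-1}'$; and the polynomials $\psi_{i}^{k-1}$ with $\rho=\rho_{k}$ form a feasible point of (\ref{eq:HOCBF-synthesis-SOS-2}) at step $k$, so $\rho_{k}'\leq\rho_{k}$. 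Hence $\rho_{1}\geq\rho_{1}'\geq\rho_{2}\geq\rho_{2}'\geq\cdots$ is monotone nonincreasing. If this sequence is unbounded below it eventually drops to a value $\leq 0$ and the procedure halts; otherwise it converges by the monotone convergence theorem, forcing $|\rho_{k}-\rho_{k}'|<\epsilon$ and $|\rho_{k}-\rho_{k-1}'|<\epsilon$ for $k$ large and hence halting.

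I expect the main obstacle to be the bookkeeping in the first two paragraphs: matching the several families of SOS and free multipliers in (\ref{eq:HOCBF-synthesis-SOS-1})--(\ref{eq:HOCBF-synthesis-SOS-2}), together with the lower-order terms hidden in $O(b(x))$ and the chained definitions of the $\psi_{i}$, to the exact Positivstellensatz certificates demanded by Proposition~\ref{prop:HOCBF-alternate} and Theorem~\ref{theorem:HOCBF-verify}, and checking that absorbing $-\rho_{k}\Lambda$ never disturbs the sign pattern. A minor point, inherited from Proposition~\ref{prop:HOCBF-alternate} rather than established here, is that the pointwise feasibility of (\ref{eq:HOCBF-def}) obtained above yields an admissible control signal valid for all $t$.
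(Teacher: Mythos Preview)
Your proposal is correct and follows essentially the same approach as the paper: rearrange each SOS constraint of (\ref{eq:HOCBF-synthesis-SOS-1}) (absorbing $-\rho_{k}\Lambda$ into the SOS term) into a Positivstellensatz certificate, use the first constraint for feasibility of (\ref{eq:HOCBF-def}), the second for containment in $\mathcal{C}$, the remaining family for the hypotheses of Proposition~\ref{prop:HOCBF-alternate}, and then invoke the monotonicity argument of Theorem~\ref{theorem:CBF-heuristic-correct} for termination. Your write-up is in fact more explicit than the paper's on why hypothesis~(ii) of Proposition~\ref{prop:HOCBF-alternate} holds (via the initialization and the equality constraints in (\ref{eq:HOCBF-synthesis-SOS-2})) and on the feasibility chaining in the termination step.
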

   
   \begin{proof}
   We prove that if $\rho_{k} \leq 0$, then $\psi_{0}^{k-1}(x),\ldots,\psi_{r}^{k-1}(x)$ defines a HOCBF and $\bigcap_{i=0}^{r}{\{\psi_{i}^{k-1}(x)\}} \subseteq \mathcal{C}$. The case of $\rho_{k}^{\prime} \leq 0$ is similar. If $\rho_{k} \leq 0$, then
   \begin{multline*}
   \overline{\eta}(x)\psi_{r}^{k-1}(x) - \alpha^{k}(x)(L_{f}^{r}\psi_{r}^{k-1}(x) + O(b(x))) + \overline{\alpha}(x)  \\
    + \sum_{i=1}^{m}{\overline{\theta}_{i}(x)L_{g}L_{f}^{r-1}\psi_{r}^{k-1}(x)} + \sum_{i=0}^{r-}{\gamma_{i}^{k}(x)\psi_{i}^{k-1}(x)}  + 1  = 0
    \end{multline*}
    for some SOS polynomial $\overline{\alpha}(x)$, where $\overline{\eta}(x) = -\eta^{k}(x)$ and $\overline{\theta}_{i}(x) = -\theta_{i}^{k}(x)$. Hence Eq. (\ref{eq:HOCBF-def}) is feasible for all $x \in \bigcap_{i=0}^{r}{\{\psi_{i}^{k-1}(x) \geq 0\}}$ by the Positivstellensatz. Similarly, we have that $$\overline{\beta}(x) + \sum_{i=0}^{r}{\psi_{i}^{k-1}(x)\beta_{i}^{k}(x)} + w(x)h(x),$$ and hence $\bigcap_{i=0}^{r}{\{\psi_{i}^{k-1}(x) \geq 0\}}$ is in the viability kernel. Finally, we have 
\begin{multline*}    
    \lambda_{i}^{k}\psi_{i}^{k-1}(x)  - \zeta_{i-1}(x)\psi_{i}^{k-1}(x) -\phi_{i}^{k}(x)L\psi_{i-1}^{k-1}(x) \\
    + \overline{\lambda}_{i}(x) + 1 = 0
\end{multline*}    
     for some SOS polynomial $\overline{\lambda}_{i}(x)$ for all $i$, and thus the conditions of Proposition \ref{prop:HOCBF-alternate} hold and the set is a subset of the viability kernel. The termination proof is similar to  Theorem \ref{theorem:CBF-heuristic-correct}.
   \end{proof}
   
   The preceding results provide heuristics for constructing CBFs and HOCBFs. However, we observe several limitations of this approach. First, there is no guarantee that the algorithm will terminate when $\rho_{k} \leq 0$, and hence no guarantee that a CBF or HOCBF will be returned. Second, the approach involves solving $2k$ SOS programs, each of which includes multiple linear matrix inequalities, and hence is computationally challenging for high-dimensional systems. In the following, we present more efficient approaches with provable guarantees for special cases of compact safe regions, as well as convex unsafe regions.
   
   \subsection{Compact Safe Region}
   \label{subsec:compact}
   In what follows, we consider the case where $\mathcal{C}$ is compact. Our approach is motivated by the following preliminary result.
   \begin{proposition}[\cite{richeson2002fixed}]
   \label{prop:compact-fixed-point}
   Suppose that $\dot{x}(t) = f(x(t))$ is a dynamical system that remains in a compact region. Then $f(x)$ has a fixed point.
   \end{proposition}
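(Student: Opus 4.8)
\emph{Proof approach.} The conclusion to establish is that $f$ has an equilibrium, i.e.\ a point $x^{*}$ with $f(x^{*})=0$. The plan is to convert this into a fixed-point statement about the time-$t$ flow map and then take a limit as $t\to 0$. First I would let $\phi_{t}(x_{0})$ denote the solution of $\dot{x}=f(x)$ at time $t$ started from $x_{0}$; since every trajectory stays in the compact region $K$, the flow $\phi_{t}$ is complete in forward time and, for each fixed $t\geq 0$, $\phi_{t}$ is a continuous self-map of $K$.

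Next I would produce, for each $n$, a point $x_{n}\in K$ fixed by $\phi_{1/n}$, i.e.\ a point lying on a periodic orbit of period dividing $1/n$ (an equilibrium being the degenerate case). When $K$ is convex this is immediate from the Brouwer fixed-point theorem applied to $\phi_{1/n}:K\to K$; for a general compact forward-invariant $K$ one invokes the fixed-point theorem for bounded dynamical systems of \cite{richeson2002fixed}, or passes to a convex compact set containing $K$ on which an outward-to-inward modification of $f$ near the boundary still leaves the set invariant. Then, using compactness of $K$, I would extract a subsequence $x_{n_{k}}\to x^{*}\in K$. Writing
\begin{displaymath}
0 = \phi_{1/n}(x_{n}) - x_{n} = \int_{0}^{1/n} f(\phi_{s}(x_{n}))\, ds,
\end{displaymath}
dividing by $1/n$, and letting $n_{k}\to\infty$, continuity of $f$ and of the flow (uniformly on the compact set $K$) forces the average of $s\mapsto f(\phi_{s}(x_{n}))$ over $[0,1/n]$ to converge to $f(x^{*})$, so $f(x^{*})=0$.

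The main obstacle is the second step: Brouwer requires a convex (or at least acyclic, sufficiently tame) domain, whereas an arbitrary compact forward-invariant set can be topologically complicated; indeed a rotational flow on an invariant annulus has no equilibrium \emph{in} the annulus, only at its center, so the equilibrium need not lie in $K$ itself. Handling this in full generality is exactly the content of \cite{richeson2002fixed}, which I would cite for the statement as given; for the compact safe regions $\mathcal{C}=\{x:h(x)\geq 0\}$ relevant in this paper, $\mathcal{C}$ is typically regular enough that the elementary Brouwer-plus-limit argument sketched above already suffices.
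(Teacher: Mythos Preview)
The paper does not prove this proposition at all; it is quoted with a citation to \cite{richeson2002fixed} and used as a preliminary black box, so there is no ``paper's own proof'' to compare against. Your sketch therefore already goes beyond what the paper provides.

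As for the sketch itself: the Brouwer-plus-limit argument is correct when the compact invariant region $K$ is convex (or, with Lefschetz in place of Brouwer, when $K$ is a compact ENR with nonzero Euler characteristic), and the limiting step via $\phi_{1/n}(x_n)-x_n=\int_0^{1/n}f(\phi_s(x_n))\,ds$ is sound given continuous dependence of the flow on time and initial condition. You also correctly flag the genuine obstruction---an annular invariant set for a rotation field shows the equilibrium need not lie in $K$---and appropriately defer the general statement to \cite{richeson2002fixed}. Since the paper only uses the proposition to motivate searching for equilibria inside a compact safe region $\mathcal{C}=\{h(x)\geq 0\}$, which in the examples is a ball or a box, your observation that the elementary convex-case argument already suffices for the intended applications is on point.
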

  
   We observe that, if a given feedback control policy ensures that $x(t)$ remains within the safe region, then the control policy results in a fixed point. Based on this preliminary result, our approach is to identify fixed points of (\ref{eq:dynamics}), and then construct barrier functions in the neighborhoods of those points. 
   \begin{Lemma}
   \label{lemma:fixed-point}
   Let $(x^{\ast},u^{\ast})$ satisfy $f(x^{\ast}) + g(x^{\ast})u^{\ast} = 0$ and $x^{\ast} \in \mbox{int}(\mathcal{C})$. Then there exists a positive definite matrix $P$ and $\delta > 0$ such that the function $$b_{0}(x; x^{\ast}) = \delta-(x-x^{\ast})^{T}P(x-x^{\ast})$$ is a CBF with $\{b_{0}(x ; x^{\ast}) \geq 0\} \subseteq \mathcal{C}$.
   \end{Lemma}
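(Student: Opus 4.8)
The plan is to build $P$ from the linearization of the closed--loop dynamics at $x^{\ast}$ and then make $\delta$ small enough that a first--order analysis dominates. Write $z = x-x^{\ast}$, let $\tilde{A} = \left.\frac{\partial}{\partial x}\left(f(x)+g(x)u^{\ast}\right)\right|_{x=x^{\ast}}$ be the Jacobian of the closed--loop system with the \emph{constant} input $u^{\ast}$, and let $\tilde{B} = g(x^{\ast})$. Because $f(x^{\ast})+g(x^{\ast})u^{\ast}=0$ and $f,g$ are polynomials, on a neighborhood of $x^{\ast}$ we have $f(x)+g(x)u^{\ast} = \tilde{A}z + R(z)$ with $\|R(z)\|\le c_{1}\|z\|^{2}$, and $g(x) = \tilde{B}+S(z)$ with $\|S(z)\|\le c_{2}\|z\|$. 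Assuming the pair $(\tilde{A},\tilde{B})$ is stabilizable, I would pick a gain $K$ making $\tilde{A}+\tilde{B}K$ Hurwitz and take $P\succ 0$ solving the Lyapunov equation $P(\tilde{A}+\tilde{B}K)+(\tilde{A}+\tilde{B}K)^{T}P = -I$.

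Containment of $\{b_{0}(x;x^{\ast})\ge 0\}$ in $\mathcal{C}$ is the easy part. Since $x^{\ast}\in\mathrm{int}(\mathcal{C})$ and $h$ is continuous, there is $\epsilon>0$ with $\{x:\|x-x^{\ast}\|\le\epsilon\}\subseteq\mathcal{C}$, so choosing $\delta\le\epsilon^{2}\lambda_{\min}(P)$ forces $\{b_{0}(x;x^{\ast})\ge 0\}=\{z:z^{T}Pz\le\delta\}\subseteq\mathcal{C}$; I will shrink $\delta$ further for the CBF part below.

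For the CBF property I would invoke Proposition \ref{prop:CBF-verify}: it suffices to show there is no $x$ with $b_{0}(x;x^{\ast})=0$, $\frac{\partial b_{0}}{\partial x}g(x)=0$, and $\frac{\partial b_{0}}{\partial x}f(x)<0$. The key observation is that $\frac{\partial b_{0}}{\partial x}=-2z^{T}P$, so at any point where $\frac{\partial b_{0}}{\partial x}g(x)=0$ the control term is annihilated and $\frac{\partial b_{0}}{\partial x}f(x)=\frac{\partial b_{0}}{\partial x}\bigl(f(x)+g(x)\hat{u}\bigr)$ for \emph{every} $\hat{u}$. Evaluating along the consistent stabilizing feedback $\hat{u}=u^{\ast}+Kz$ and using the expansions above,
\begin{align*}
\frac{\partial b_{0}}{\partial x}\bigl(f(x)+g(x)(u^{\ast}+Kz)\bigr) &= -2z^{T}P\bigl[(\tilde{A}+\tilde{B}K)z+\tilde{R}(z)\bigr] \\
&= \|z\|^{2} - 2z^{T}P\tilde{R}(z),
\end{align*}
where $\tilde{R}(z)=R(z)+S(z)Kz=O(\|z\|^{2})$ and I used the Lyapunov identity. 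On the boundary $z^{T}Pz=\delta$ we have $\|z\|$ of order $\sqrt{\delta}$, so this quantity is at least $\|z\|^{2}-c_{3}\|z\|^{3}>0$ once $\delta$ is small enough; together with the annihilation identity this yields $\frac{\partial b_{0}}{\partial x}f(x)>0$, which is incompatible with $\frac{\partial b_{0}}{\partial x}f(x)<0$. Hence no such $x$ exists and $b_{0}(\cdot\,;x^{\ast})$ is a CBF.

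The hard part — and the step I would flag explicitly — is the stabilizability of $(\tilde{A},\tilde{B})$. If the closed--loop linearization at $x^{\ast}$ has an unstable mode that is uncontrollable, then no $P\succ 0$ can make the restricted quadratic form $z^{T}(P\tilde{A}+\tilde{A}^{T}P)z$ nonpositive on $\ker(\tilde{B}^{T}P)$, and a violating boundary point survives for every $P$ and every $\delta$; so this hypothesis (equivalently, that $x^{\ast}$ is a fixed point that can be locally rendered invariant) is doing genuine work and should be part of the statement. Everything else is routine: obtaining a uniform bound on the remainder $\tilde{R}$ over the shrinking ellipsoids is immediate since $f$ and $g$ are polynomials, and the Lyapunov/LQR construction of $K$ and $P$ is standard.
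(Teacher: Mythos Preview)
Your proposal is correct and follows essentially the same route as the paper: linearize at $(x^{\ast},u^{\ast})$, choose a stabilizing feedback gain $K$, take $P$ from the associated Lyapunov equation, and shrink $\delta$ so the ellipsoid lies in $\mathcal{C}$ and the linear part dominates the polynomial remainder. The paper verifies the CBF property directly through Definition~\ref{def:CBF} by exhibiting the feedback $\tilde{u}=u^{\ast}-K(x-x^{\ast})$ with $\dot{b}_{0}\ge 0$ on a neighborhood, whereas you route through Proposition~\ref{prop:CBF-verify}, but the underlying argument is the same; you are also right that stabilizability of the linearized pair is an implicit hypothesis that the paper simply assumes when it writes ``let $\tilde{u}$ be a stabilizing controller.''
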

   
   \begin{proof}
    Let $\dot{x}(t) = Fx(t) + Gu(t)$ be the linearization of (\ref{eq:dynamics}) in a neighborhood of $(x^{\ast},u^{\ast})$. Let $\tilde{u}(t) = {u}^{\ast} - K(x-x^{\ast})$ be a stabilizing controller of the linearized system. Hence, there is a quadratic Lyapunov function $V(x) = (x-x^{\ast})^{T}P(x-x^{\ast})$ such that  (\ref{eq:dynamics}) is asymptotically stable in a neighborhood of $x^{\ast}$, i.e., $$\frac{\partial V}{\partial x}(f(x) + g(x)\tilde{u}(t)) \leq 0.$$ By construction of $b_{0}(x ; x^{\ast})$, this is equivalent to $$\frac{\partial b_{0}(x ; x^{\ast})}{\partial x}(f(x) + g(x)\tilde{u}) \geq 0$$ for $(x-x^{\ast})$ sufficiently small. Furthermore, since $x^{\ast}$ is an interior point, $b_{0}(x ; x^{\ast}) \geq 0$ is contained in $\mathcal{C}$ for $\delta$ sufficiently small.
    \end{proof}
    
    The preceding lemma implies that we can construct a CBF by identifying a fixed point $(x^{\ast},u^{\ast})$, constructing a stabilizing linear controller $\overline{u} = -K\overline{x}$ of the linearized system, and solving the Lyapunov equation $$\overline{F}^{T}P + P\overline{F} + N = 0,$$ where $\overline{F} = (F-GK)$ and $N$ is a positive definite matrix. We can then to find a value of $\delta$ such that $b_{0}(x ; x^{\ast})$ is a CBF by solving a sequence of SOS verification problems of the form (\ref{eq:CBF-condition}) and (\ref{eq:CBF-feasible}).    We have the following result.

    
    \begin{Theorem}
    \label{theorem:compact-CBF}
    Suppose that $(x^{\ast},u^{\ast})$ is a fixed point of (\ref{eq:dynamics}), $P$ is a solution to the Lyapunov equation corresponding to a stabilizing controller of the linearized system, and the SOS constraints 
    \begin{multline}
    \label{eq:compact-SOS-1}
    \alpha(x)(-2(x-x^{\ast})^{T}Pf(x)) + \sum_{i=1}^{m}{\theta_{i}(x)\left[-2(x-x^{\ast})^{T}Pg(x)\right]_{i}} \\
    + \eta(x)(\delta-(x-x^{\ast})^{T}P(x-x^{\ast})) \in SOS
    \end{multline}
    and
    \begin{multline}
    \label{eq:compact-SOS-2}
    h(x)-\beta(x)(\delta - (x-x^{\ast})^{T}P(x-x^{\ast})) \in SOS
    \end{multline}
    hold for some polynomials $\theta_{1},\ldots,\theta_{m}, \eta(x)$ and SOS polynomials $\alpha(x)$ and $\beta(x)$. 
    Then $b_{0}(x ; x^{\ast}) = \delta - (x-x^{\ast})^{T}P(x-x^{\ast})$ is a CBF and $\{b_{0}(x ; x^{\ast}) \geq 0\}$ is in the viability kernel.
    \end{Theorem}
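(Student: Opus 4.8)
The plan is to reduce the statement to Proposition~\ref{prop:CBF-verify} and Theorem~\ref{theorem:CBF-verify} by recognizing that the two SOS constraints are exactly the certificates appearing in those results, specialized to the quadratic candidate $b_{0}(x;x^{\ast}) = \delta - (x-x^{\ast})^{T}P(x-x^{\ast})$. The first step is purely computational: since $\frac{\partial b_{0}}{\partial x} = -2(x-x^{\ast})^{T}P$, the bracketed quantities $-2(x-x^{\ast})^{T}Pf(x)$, $[-2(x-x^{\ast})^{T}Pg(x)]_{i}$, and $\delta - (x-x^{\ast})^{T}P(x-x^{\ast})$ appearing in (\ref{eq:compact-SOS-1})--(\ref{eq:compact-SOS-2}) are precisely $\frac{\partial b_{0}}{\partial x}f(x)$, $\left[\frac{\partial b_{0}}{\partial x}g(x)\right]_{i}$, and $b_{0}(x;x^{\ast})$. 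Hence (\ref{eq:compact-SOS-1}) asserts that $\alpha(x)\frac{\partial b_{0}}{\partial x}f(x) + \sum_{i}\theta_{i}(x)\left[\frac{\partial b_{0}}{\partial x}g(x)\right]_{i} + \eta(x)b_{0}(x;x^{\ast})$ is SOS, and (\ref{eq:compact-SOS-2}) asserts that $h(x) - \beta(x)b_{0}(x;x^{\ast})$ is SOS.

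For the CBF claim I would argue by contradiction through Proposition~\ref{prop:CBF-verify}. Suppose there is a point $x_{0}$ with $b_{0}(x_{0};x^{\ast}) = 0$, $\frac{\partial b_{0}}{\partial x}g(x_{0}) = 0$, and $\frac{\partial b_{0}}{\partial x}f(x_{0}) < 0$. Evaluating the SOS polynomial in (\ref{eq:compact-SOS-1}) at $x_{0}$ annihilates the $\theta_{i}$ and $\eta$ terms, leaving $\alpha(x_{0})\frac{\partial b_{0}}{\partial x}f(x_{0}) \geq 0$. Taking $\alpha$ to be positive definite (a positive constant suffices) and using $\frac{\partial b_{0}}{\partial x}f(x_{0}) < 0$ gives a contradiction; hence no such $x_{0}$ exists and $b_{0}(x;x^{\ast})$ is a CBF by Proposition~\ref{prop:CBF-verify}. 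After relabeling the multiplier polynomials this is also an instance of condition (\ref{eq:CBF-condition}) of Theorem~\ref{theorem:CBF-verify} with $b = b_{0}$.

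For the viability-kernel claim, (\ref{eq:compact-SOS-2}) gives $h(x) - \beta(x)b_{0}(x;x^{\ast}) \geq 0$ for all $x$; restricting to the ellipsoid $\{b_{0}(x;x^{\ast}) \geq 0\}$, on which $\beta(x)b_{0}(x;x^{\ast}) \geq 0$ because $\beta$ is SOS, yields $h(x) \geq 0$, i.e.\ $\{b_{0}(x;x^{\ast}) \geq 0\} \subseteq \mathcal{C}$; this is the compact-set analogue of (\ref{eq:CBF-feasible}). Since $b_{0}(x;x^{\ast})$ is a CBF, Theorem~\ref{theorem:safety-CBF} shows that any controller satisfying (\ref{eq:CBF-def}) renders $\{b_{0}(x;x^{\ast}) \geq 0\}$ positive invariant; being contained in $\mathcal{C}$, this set is a valid choice of $\Omega$ in Definition~\ref{def:viability} and therefore lies in the viability kernel. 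Lemma~\ref{lemma:fixed-point} guarantees that a $\delta$ for which both SOS constraints are feasible exists, so the hypothesis is not vacuous.

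The main obstacle is the treatment of the multiplier $\alpha$: if $\alpha$ is only required to be SOS, then at a hypothetical violating point it could vanish and the nonnegativity of (\ref{eq:compact-SOS-1}) would give no information, so one must keep $\alpha$ bounded away from zero on the relevant set (for instance $\alpha$ positive definite, or simply $\alpha \equiv 1$). A secondary, more bookkeeping-level point is that (\ref{eq:compact-SOS-1})--(\ref{eq:compact-SOS-2}) are nonnegativity certificates tailored to the compact ellipsoid rather than the global Positivstellensatz form of Theorem~\ref{theorem:CBF-verify} (which carries the extra ``$+1$'' slack term and a power $2r$); matching the two exactly would require choosing $w(x)$, the integer $r$, and the remaining multipliers carefully, which is why I would route the argument through Proposition~\ref{prop:CBF-verify} directly rather than literally invoking (\ref{eq:CBF-condition})--(\ref{eq:CBF-feasible}).
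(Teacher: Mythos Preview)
Your argument is essentially the same as the paper's: for the CBF claim the paper simply cites Theorem~\ref{theorem:CBF-verify}, while you unwind that citation and argue directly via Proposition~\ref{prop:CBF-verify}; for the containment $\{b_{0}\geq 0\}\subseteq\mathcal{C}$ you and the paper give the identical S-procedure computation, writing $h(x)=\beta(x)b_{0}(x)+\beta_{0}(x)$ with $\beta,\beta_{0}$ SOS and concluding $h\geq 0$ on $\{b_{0}\geq 0\}$.

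Your route through Proposition~\ref{prop:CBF-verify} is in fact cleaner than the paper's bare citation of Theorem~\ref{theorem:CBF-verify}, since, as you observe, (\ref{eq:compact-SOS-1}) is an S-procedure certificate rather than the full Positivstellensatz identity (\ref{eq:CBF-condition}) with its $\bigl(\tfrac{\partial b}{\partial x}f\bigr)^{2r}$ term and ``$+1$'' slack. Your caveat about $\alpha$ possibly vanishing is a genuine subtlety that the paper's one-line proof does not address: as stated, (\ref{eq:compact-SOS-1}) only forces $\alpha(x_{0})\,\tfrac{\partial b_{0}}{\partial x}f(x_{0})\geq 0$ at a boundary point with $\tfrac{\partial b_{0}}{\partial x}g(x_{0})=0$, which is vacuous if $\alpha(x_{0})=0$. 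Requiring $\alpha$ strictly positive (e.g.\ $\alpha\equiv 1$, or adding a ``$-1$'' to the left side as in (\ref{eq:synthesis-SOS-1})) closes this, and is the natural reading of the theorem; you are right to flag it.
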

    
    \begin{proof}
    If (\ref{eq:compact-SOS-1}) holds, then $b_{0}$ is a CBF by Theorem \ref{theorem:CBF-verify}. To show (\ref{eq:compact-SOS-2}) implies that $\{b_{0} \geq 0\}$ is in the viability kernel, we use nonnegativity of SOS polynomials to conclude that $\{b_{0}(x) \geq 0\}$ implies $\{h(x) \geq 0\}$ if $$h(x) = \beta(x)b_{0}(x) + \beta_{0}(x)$$ for some SOS $\beta$ and $\beta_{0}$. Rearranging terms gives (\ref{eq:compact-SOS-2}).
    \end{proof}
    
    The preceding theorem suggests that a binary search approach can be used to find the maximum $\delta$ such that (\ref{eq:compact-SOS-1}) and (\ref{eq:compact-SOS-2}) hold, and choosing $b_{0}(x ; x^{\ast})$ as the CBF. Moreover, we can generalize this approach by considering functions of the form $$b_{1}(x ; x^{\ast}) = b_{0}(x; x^{\ast}) + \alpha(h(x)),$$ where $\alpha(\cdot)$ is a class-K function. We have the following.
    \begin{Lemma}
    \label{lemma:CBF-compact-construction}
    If $b_{1}(x ; x^{\ast})$ is a CBF, then the set $$\mathcal{V}(b_{0},b_{1}) = \left(\{b_{0}(x ; x^{\ast}) \geq 0\} \cup \{b_{1}(x ; x^{\ast}) \geq 0\}\right) \cap \mathcal{C}$$ is in the viability kernel.
    \end{Lemma}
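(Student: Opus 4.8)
The plan is to reduce the claim to the remark following Theorem~\ref{theorem:intersection-CBF}: a union of zero-superlevel sets of CBFs, each with superlevel set contained in $\mathcal{C}$, lies in the viability kernel. For $b_{0}(x;x^{\ast})$ this is immediate from Lemma~\ref{lemma:fixed-point}, so the only substantive step is to verify that $\{x : b_{1}(x;x^{\ast}) \geq 0\}$ is likewise contained in $\mathcal{C}$; everything after that is bookkeeping.

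First I would record the sign structure of $b_{1} = b_{0} + \alpha(h)$. By Lemma~\ref{lemma:fixed-point}, $\{x : b_{0}(x;x^{\ast}) \geq 0\} \subseteq \mathcal{C} = \{x : h(x) \geq 0\}$, so by contraposition $b_{0}(x;x^{\ast}) < 0$ at every $x$ with $h(x) < 0$. Since $\alpha$ is a class-K function, $\alpha(h(x)) \leq \alpha(0) = 0$ whenever $h(x) \leq 0$. Hence at any $x$ with $h(x) < 0$ we get $b_{1}(x;x^{\ast}) = b_{0}(x;x^{\ast}) + \alpha(h(x)) < 0$, so such $x$ does not satisfy $b_{1}(x;x^{\ast}) \geq 0$; that is, $\{x : b_{1}(x;x^{\ast}) \geq 0\} \subseteq \mathcal{C}$. (Under the convention that $\alpha$ is defined only on $[0,\infty)$, $b_{1}$ is defined only on $\mathcal{C}$ and the inclusion holds trivially.) I expect this to be the main obstacle --- not because it is deep, but because it is the one spot where one must fix the convention used for class-K functions on the negative axis.

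With $\{b_{0}(\cdot;x^{\ast}) \geq 0\} \subseteq \mathcal{C}$ and $\{b_{1}(\cdot;x^{\ast}) \geq 0\} \subseteq \mathcal{C}$, these inclusions give $\mathcal{V}(b_{0},b_{1}) = \{b_{0}(\cdot;x^{\ast}) \geq 0\} \cup \{b_{1}(\cdot;x^{\ast}) \geq 0\}$, and I would finish by a two-case argument on an arbitrary $x_{0} \in \mathcal{V}(b_{0},b_{1})$. If $b_{0}(x_{0};x^{\ast}) \geq 0$, apply a control satisfying the CBF inequality~(\ref{eq:CBF-def}) for $b_{0}$; by Theorem~\ref{theorem:safety-CBF} the set $\{b_{0}(\cdot;x^{\ast}) \geq 0\}$ is positive invariant, so $x(t) \in \{b_{0}(\cdot;x^{\ast}) \geq 0\} \subseteq \mathcal{C}$ for all $t$. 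Otherwise $b_{1}(x_{0};x^{\ast}) \geq 0$, and since $b_{1}$ is assumed a CBF, a control satisfying~(\ref{eq:CBF-def}) for $b_{1}$ keeps $x(t) \in \{b_{1}(\cdot;x^{\ast}) \geq 0\} \subseteq \mathcal{C}$ for all $t$, again by Theorem~\ref{theorem:safety-CBF}. In both cases there is an admissible input keeping the trajectory in $\mathcal{C}$ for all time, so $x_{0}$ lies in the viability kernel; since $x_{0}$ was arbitrary, $\mathcal{V}(b_{0},b_{1})$ is contained in the viability kernel. Equivalently, once the inclusion $\{b_{1}(\cdot;x^{\ast}) \geq 0\} \subseteq \mathcal{C}$ is available one may simply invoke the union remark after Theorem~\ref{theorem:intersection-CBF} with the two CBFs $b_{0}$ and $b_{1}$.
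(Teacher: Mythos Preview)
Your argument is correct and a little more direct than the paper's. Both proofs hinge on the same sign observation---that $b_{0}(x;x^{\ast})<0$ whenever $h(x)<0$ (by Lemma~\ref{lemma:fixed-point}) and $\alpha(h(x))\leq 0$ whenever $h(x)\leq 0$---but they deploy it differently. The paper builds a state-dependent switching policy (follow the $b_{0}$ constraint whenever $b_{0}\geq 0$, otherwise follow $b_{1}$) and argues by contradiction: if the trajectory were to reach $\partial\mathcal{C}$ at some first time $t^{\ast}$ while $b_{0}<0$ throughout, then $b_{1}(x(t^{\ast}))=b_{0}(x(t^{\ast}))+\alpha(0)<0$, contradicting positive invariance of $\{b_{1}\geq 0\}$. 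You instead establish the static containment $\{b_{1}(\cdot;x^{\ast})\geq 0\}\subseteq\mathcal{C}$ up front, which lets you dispense with the switching policy and the first-exit-time argument entirely: a fixed choice of either the $b_{0}$-constraint or the $b_{1}$-constraint at time zero already keeps the trajectory in $\mathcal{C}$. Your route is shorter and makes clear that the switching in the paper's policy is not actually needed for the conclusion; the paper's route, on the other hand, only invokes $\alpha$ at the single value $\alpha(0)=0$ and so sidesteps the domain convention for class-K functions on negative arguments that you flag.
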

    
    \begin{proof}
    We construct a control policy that ensures positive invariance of $\mathcal{V}(b_{0},b_{1})$ as follows. We select $\mu$ to satisfy (\ref{eq:CBF-def}) for $b_{0}(x ; x^{\ast})$ when $b_{0}(x(t) ; x^{\ast}) \geq 0$ and to satisfy (\ref{eq:CBF-def}) for $b_{1}(x ; x^{\ast}) \geq 0$ if $b_{1}(x(t) ; x^{\ast}) \geq 0$ and $b_{0}(x ; x^{\ast}) < 0$.
    
    First, suppose that $b_{0}(x(t^{\prime}) ; x^{\ast}) \geq 0$ for some $t^{\prime}$. Since $b_{0}(x ; x^{\ast})$ is a CBF, the set $\{b_{0}(x ; x^{\ast}) \geq 0\}$ is positive invariant and is contained in $\mathcal{C}$. 
    
    Next, suppose that $x(t_{0}) \in \mathcal{V}(b_{0},b_{1}) \setminus \{b_{0}(x) \geq 0\}$ and yet $x(t_{0})$ is not in the viability kernel. Let $$t^{\ast} = \inf{\{t : h(x(t)) = 0\}}.$$ By the preceding discussion, we must have $b_{0}(x(t)) < 0$ for $t \in [t_{0},t^{\ast}]$. We therefore have
    \begin{IEEEeqnarray*}{rCl}
    b_{1}(x(t^{\ast}) ; x^{\ast}) &=& b_{0}(x(t^{\ast}) ; x^{\ast}) + \alpha(h(x(t^{\ast})) \\
    &=& b_{0}(x(t^{\ast})) < 0
    \end{IEEEeqnarray*}
    contradicting Theorem \ref{theorem:safety-CBF}.
    \end{proof}
    One candidate for $\alpha$ is the function $\alpha(h(x)) = kh(x)$, where $k$ is a positive real number. We observe that, for $k$ sufficiently small, $b_{0}(x ; x^{\ast}) + kh(x)$ is a CBF. We can therefore select such a $k$ by solving the feasibility problem (\ref{eq:CBF-condition})--(\ref{eq:CBF-feasible}) and approximate the viability kernel as $$\bigcup_{x^{\ast} \in FP}{\left(\left(\{b_{0}(x ; x^{\ast}) \geq 0\} \cup \{b_{0}(x; x^{\ast}) + kh(x) \geq 0\}\right) \cap \mathcal{C}\right)}$$ where FP denotes the set of fixed points of (\ref{eq:dynamics}).
    
    We observe that, while this approach is valid for general unsafe regions, there may be safe trajectories that do not contain fixed points when the safe region is not compact.
    
    \begin{figure*}
\centering
$\begin{array}{cc}
\includegraphics[width=3in]{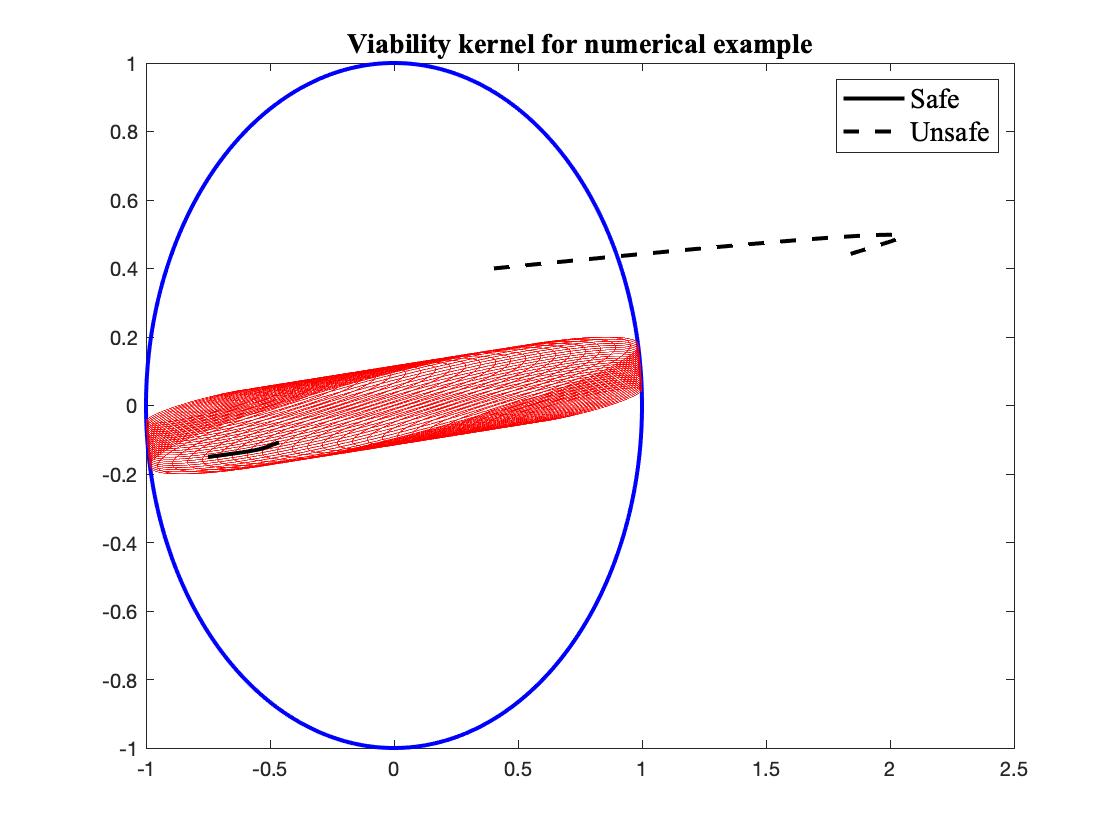} &
\includegraphics[width=3in]{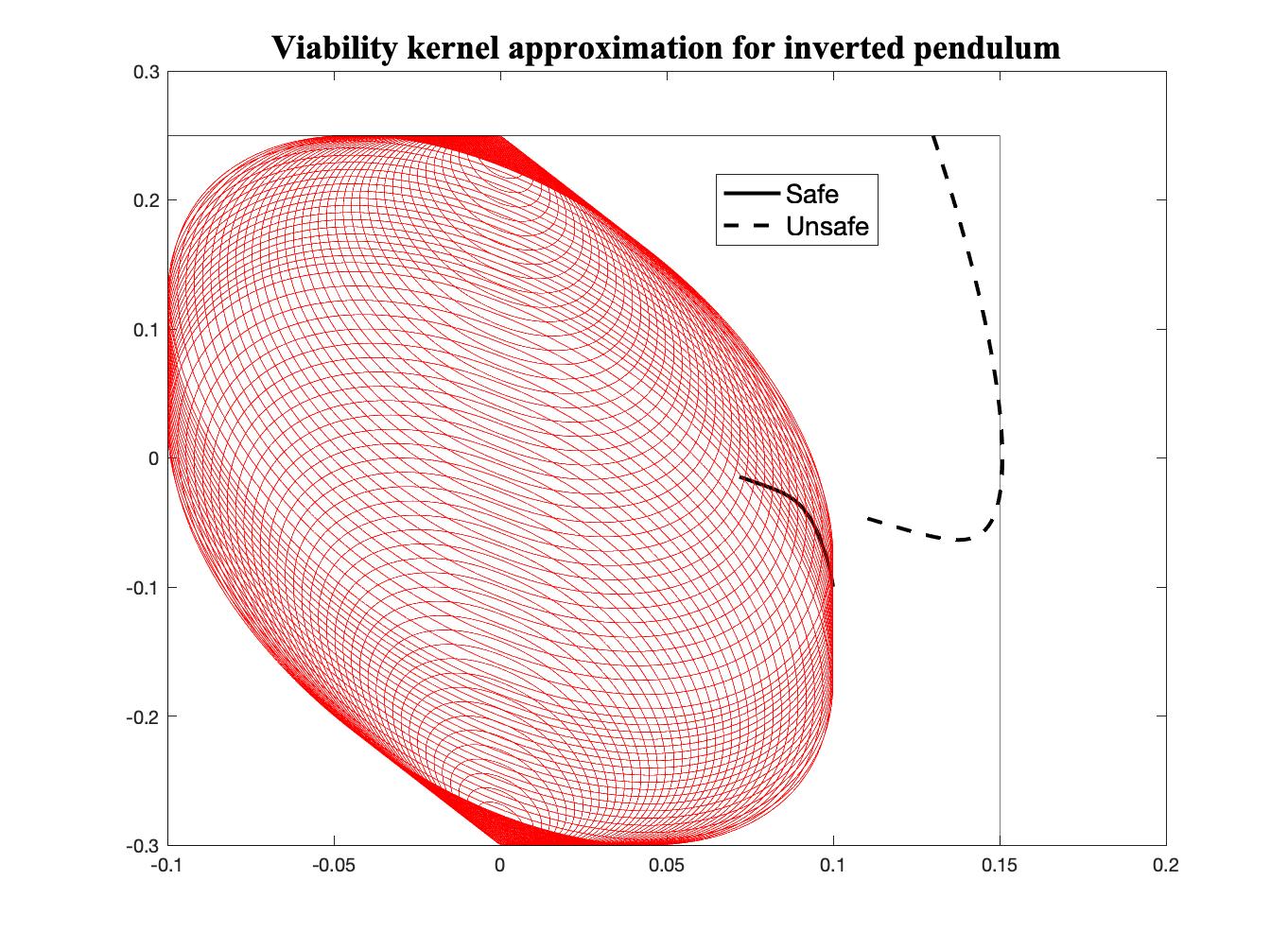} \\
\mbox{(a)} & \mbox{(b)}
\end{array}$
\caption{Illustration of our CBF framework. (a) Numerical example showed in (\ref{eq:numerical-example}). A trajectory originating inside the shaded viability kernel remains inside the safe region (outer blue circle), while a trajectory originating outside the viability kernel but inside the safe region exits the safe region. (b) Inverted pendulum with dynamics (\ref{eq:inverted-pendulum}). The trajectory originating inside the viability kernel remains safe, while the trajectory originating outside the viability kernel exits the safe region.}
\label{fig:simulation}
\end{figure*}
    
    \subsection{Convex Unsafe Region}
    \label{subsec:convex-unsafe}
    In what follows, we consider a special case where the viability kernel and $\mathcal{C}$ are identical. This is the case of controllable linear systems $\dot{x}(t) = Fx(t) + Gu(t)$ where the safe region is given by $\mathcal{C} = \mathbb{R}^{n} \setminus \bigcup_{i=1}^{m}{\{h_{i}(x) \leq 0\}}$, where each $h_{i}$ is a convex function, and the distance $d(\{h_{i}(x) \leq 0\}, \{h_{j} \leq 0\}) > \delta$ for all $i,j$ and some $\delta > 0$. We observe that this problem formulation has been considered, e.g., in \cite{khansari2012dynamical}, and safe control strategies have been proposed. This section presents a CBF-based framework for convex obstacle avoidance.
    
    \begin{Theorem}
    \label{theorem:convex-safe}
    Under the conditions described above, the viability kernel is equal to $\mathcal{C}$, and there exists a CBF-based policy that ensures safety.
    \end{Theorem}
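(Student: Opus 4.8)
\emph{Proof proposal.} Since the viability kernel is by definition a subset of $\mathcal{C}$, it suffices to establish the reverse inclusion: from every $x_{0}\in\mathcal{C}$ there is a control signal that keeps $x(t)\in\mathcal{C}$ for all $t\geq 0$, and by the remark following Theorem~\ref{theorem:intersection-CBF} it suffices to build this signal from CBF/HOCBF constraints in order also to get the ``CBF-based policy'' claim. The plan is to reduce avoidance of the convex obstacles $O_{i}=\{x:h_{i}(x)\leq 0\}$ to enforcing, one at a time, a single affine (half-space) constraint obtained from a hyperplane separating the current state from the nearest obstacle, and to show each such half-space constraint is enforceable because $(F,G)$ is controllable and the input is unconstrained.

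The first ingredient is a lemma on half-spaces: for any nonzero $a\in\mathbb{R}^{n}$ and any $c$, the affine function $\ell(x)=a^{T}x-c$ has finite relative degree $\rho\leq n$ with respect to $\dot{x}=Fx+Gu$, because the Kalman rank condition forces $a^{T}F^{k}G\neq 0$ for some $k<n$; hence, with suitably chosen class-$\mathcal{K}$ gains, $\ell$ is a HOCBF (Definition~\ref{def:HOCBF}) and by Theorem~\ref{theorem:safety-HOCBF} the associated set $\bigcap_{i=0}^{\rho}C_{i}$ is controlled invariant. Because $L_{g}L_{f}^{\rho-1}\ell=a^{T}F^{\rho-1}G$ is a nonzero constant row vector and $u$ is unbounded, $y(t)=a^{T}x(t)$ can be treated as a pure chain of $\rho$ integrators driven by a free input, so from any state with $a^{T}x_{0}>c$ there is a control keeping $a^{T}x(t)$ above any prescribed level $c'<a^{T}x_{0}$ for all $t$ --- a finite-horizon, bounded-overshoot steering problem that is always solvable with unbounded actuation. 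I would state and prove this as a separate lemma, either by a comparison argument on $a^{T}x(t)$ or by choosing the class-$\mathcal{K}$ functions steeply enough that $x_{0}\in\bigcap_{i=0}^{\rho}C_{i}\subseteq\{x:a^{T}x\geq c'\}$.

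Next I would assemble the global policy. Given $x_{0}\in\mathcal{C}$, each $O_{i}$ is closed, convex, and avoids $x_{0}$, so there is $\rho_{0}>0$ with $d(x_{0},O_{i})\geq\rho_{0}$ for every $i$, and the separating hyperplane theorem yields a unit $a_{i}$ and scalar $c_{i}$ with $a_{i}^{T}x_{0}>c_{i}\geq\sup_{y\in O_{i}}a_{i}^{T}y$. Put $r=\tfrac{1}{3}\min(\rho_{0},\delta)$ and let $\mathcal{Z}_{i}$ be the open $r$-neighborhood of $O_{i}$; by the $\delta$-separation the $\mathcal{Z}_{i}$ are pairwise disjoint and $x_{0}\notin\bigcup_{i}\mathcal{Z}_{i}$. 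The switched control is: while $x(t)\notin\bigcup_{i}\mathcal{Z}_{i}$, apply any locally bounded control; on entering some $\mathcal{Z}_{i}$ at a state $\bar{x}$, recompute a separating hyperplane $(\bar{a}_{i},\bar{c}_{i})$ for $\bar{x}$ and $O_{i}$ and, until the trajectory leaves $\mathcal{Z}_{i}$, apply the control of the lemma that keeps $\bar{a}_{i}^{T}x(t)\geq\bar{c}_{i}+\varepsilon$ for a fixed margin $\varepsilon>0$. Since $O_{i}\subseteq\{x:\bar{a}_{i}^{T}x\leq\bar{c}_{i}\}$ lies in the interior of $\mathcal{Z}_{i}$ and can only be reached by a trajectory that first enters $\mathcal{Z}_{i}$, and throughout such a visit the state is confined to $\{x:\bar{a}_{i}^{T}x\geq\bar{c}_{i}+\varepsilon\}$, the trajectory never meets any $O_{i}$; hence $x(t)\in\mathcal{C}$ for all $t$, the viability kernel equals $\mathcal{C}$, and the policy is CBF-based as desired.

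The main obstacle is making the switched construction rigorous on three counts. (i) Well-posedness of the discontinuous closed loop at the switching surfaces $\partial\mathcal{Z}_{i}$: I would insert a thin hysteresis band around each $\partial\mathcal{Z}_{i}$ to rule out chattering and Zeno behavior. (ii) Validity of the separator over a visit: this is automatic, because once inside $\mathcal{Z}_{i}$ we only ask the state to stay in the \emph{fixed} half-space $\{x:\bar{a}_{i}^{T}x\geq\bar{c}_{i}+\varepsilon\}$, which is disjoint from $O_{i}$ no matter where the trajectory travels. (iii) The relative-degree-greater-than-one subtlety: on entry the higher-order terms $\psi_{1}(\bar{x}),\dots,\psi_{\rho-1}(\bar{x})$ of the half-space HOCBF may be negative (the state is ``plunging'' toward the hyperplane in the HOCBF sense), so one must verify the constraint can still be caught before $\bar{a}_{i}^{T}x$ falls below $\bar{c}_{i}$; the positive collar thickness $r$ together with unbounded authority on $y^{(\rho)}$ makes this a solvable point-to-set steering problem, but converting ``solvable'' into an explicit class-$\mathcal{K}$ choice or an explicit braking control with a comparison lemma is where the real work lies. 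I would also reconcile the open/closed mismatch between $\mathcal{C}$ and $\bigcup_{i}\{x:h_{i}(x)\leq 0\}$ by passing to closures where needed.
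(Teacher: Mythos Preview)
Your proposal is correct and shares the paper's core idea: reduce avoidance of each convex obstacle to enforcing an affine half-space constraint obtained from a separating hyperplane, and make that half-space into a CBF/HOCBF using controllability of $(F,G)$ (so that $a^{T}F^{k}G\neq 0$ for some $k<n$). Where you differ is in how the switching is organized. The paper does not recompute a separator on entry to a neighborhood; instead it fixes $\epsilon<\delta$, covers the collar $\{x:h_{i}(x)\in(\epsilon/2,\epsilon)\}$ by a family of balls $U_{ij}$, and attaches to each ball a \emph{fixed} affine separator $\bar h_{ij}(x)=a_{ij}^{T}x-b_{i}$ between $U_{ij}$ and $\{h_{i}<\epsilon/2\}$. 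The active CBF/HOCBF is then determined by which ball the state occupies. For the higher-relative-degree case the paper writes down the HOCBF cascade with linear gains $k_{0},\dots,k_{r-2}$ and argues that in the limit $k_{0}\to\infty$, $k_{l}\to 0$ for $l>0$ the HOCBF invariant set $\bigcap_{l}C_{ij}^{l}$ collapses to $\{\bar h_{ij}\geq 0\}$, which sidesteps the ``plunging'' issue you identify in point~(iii) rather than solving it explicitly.

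Your dynamic-separator version buys a cleaner one-region-per-obstacle picture and an honest accounting of the well-posedness and HOCBF-initialization issues (your points~(i) and~(iii)); the paper's static cover buys a construction in which all barrier functions are fixed in advance, at the cost of a possibly infinite index set $\mathcal{I}_{i}$ and the same limiting argument on gains that you flag as ``the real work.'' Neither version fully discharges that step; you are simply more candid about it.
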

    
    \begin{proof}
   We will prove that, for any $\epsilon > 0$, there exists a control policy such that $\bigcap_{i=1}^{m}{\{h_{i}(x) > \epsilon\}}$ is in the viability kernel. The proof is by construction. We assume that $\epsilon < \delta$. As a first step, for each $i$, we cover the set $\{x : h_{i}(x) \in (\epsilon/2, \epsilon)\}$ with a collection of  balls denoted $\{U_{ij} : j \in \mathcal{I}_{i}\}$ where $\mathcal{I}_{i}$ is an index set. For each $U_{ij}$, we let $\overline{h}_{ij}(x) = a_{ij}^{T}x - b_{i}$ be such that $\{\overline{h}_{ij}(x) \geq 0\}$ is a separating hyperplane between $U_{ij}$ and $\{h_{i}(x) < \epsilon /2\}$. 
   
   We consider each set $U_{ij}$. In the case where $a_{ij}^{T}G \neq 0$, our control policy follows the CBF constraint (\ref{eq:CBF-def}) whenever $x \in U_{ij}$. When $a_{ij}^{T}G = 0$, we have that $a_{ij}^{T}F^{r}G \neq 0$ for some $r \in \{1,\ldots,n-1\}$ since $(F,G)$ is controllable. We then define an HOCBF as
   \begin{IEEEeqnarray*}{rCl}
   \overline{h}_{ij}^{0}(x) &=& \overline{h}_{ij}(x) \\
   \overline{h}_{ij}^{1}(x) &=& a_{ij}^{T}Fx + k_{0}\overline{h}_{ij}^{0}(x) \\
   & \vdots & \\
   \overline{h}_{ij}^{r-1}(x) &=& a_{ij}^{T}F^{r-1}x + \sum_{l = 1}^{r-2}{\left(\prod_{s=l}^{r-2}{k_{s}}\right)a_{ij}^{T}F^{l}x} \\
   && + \left(\prod_{s=0}^{r-2}{k_{s}}\right)\overline{h}_{ij}^{0}(x)
   \end{IEEEeqnarray*}
   for $k_{0},\ldots,k_{r-2} > 0$. In the limit as $k_{l} \rightarrow 0$ (for $l > 0$) and $k_{0} \rightarrow \infty$, the set $$C_{ij} = \bigcap_{l=0}^{r-1}{\{\overline{h}_{ij}^{l}(x) \geq 0\}}$$ approaches $\{\overline{h}_{ij}(x) \geq 0\}$.
   
   Hence, by following this control policy, the set $\bigcap_{i=1}^{m}{\{h_{i}(x) > \epsilon\}}$ is positive invariant, implying the viability kernel is equal to $\mathcal{C}$.
    \end{proof}
\section{Simulation}
\label{sec:simulation}
We evaluated our approach through simulation of two example systems. First, we examined an unstable linear system with dynamics
\begin{equation}
\label{eq:numerical-example}
\dot{x}(t) = \left(
\begin{array}{cc}
1 & 0 \\
-1 & 4
\end{array}
\right)x(t) + \left(
\begin{array}{c}
1\\
0
\end{array}
\right)u(t)
\end{equation}
As the safe region, we chose $\{h(x) \geq 0\}$ where $h(x) = 1 - x^{T}x$, i.e., the unit disc. We first approximated the viability kernel using the approach of Section \ref{subsec:compact}, using the stabilizing feedback controller $u(t) = -Kx(t)$ where $K = (8 \ -30)$. The approximate viability kernel is shown as the red shaded in region in Fig. \ref{fig:simulation}(a). To evaluate our approximation of the viability kernel, we simulated trajectories beginning inside and outside the kernel at the points $(-0.75 \ -0.15)^{T}$ and $(-0.4 \ -0.4)^{T}$, respectively. For both points, we used a CBF-based control policy with CBF $\overline{h} = k - (x-x_{0})^{T}P(x-x_{0})$, where $k = 1.1575$, $x_{0} = (0.1378 \ 0)$, and 
\begin{displaymath}
P = \left(
\begin{array}{cc}
6.23 & -26.7 \\
-26.7 & 146.7
\end{array}
\right)
\end{displaymath}
As shown in Fig. \ref{fig:simulation}(a), the trajectory originating inside the viability kernel remains safe, whereas the trajectory originating at the point outside the viability kernel leaves the safe region.

We next evaluated our approach on an inverted pendulum with dynamics
\begin{equation}
\label{eq:inverted-pendulum}
\dot{x}(t) = 
\left(
\begin{array}{cc}
0 & 1 \\
1 & 0
\end{array}
\right)x(t) + \left(
\begin{array}{c}
0 \\
1
\end{array}
\right)u(t)
\end{equation}
and the safe region is defined by $[-0.1, 0.15] \times [-0.3, 0.25]$. This region can be viewed as the intersection of four half-plane constraints. We approximated the viability kernel of Section \ref{subsec:compact} with feedback controller $u(t) = -Kx(t)$ where $K = (3 \ 3)$. We simulated trajectories beginning inside and outside the kernel at the points $(0.1 \ -0.1)^{T}$ and $(0.13 \ 0.25)^{T}$, respectively. For both initial conditions, we used a CBF-based control policy with CBF $\overline{h} = k-(x-x_{0})^{T}P(x-x_{0})$, where $k = 0.01$, $x_{0} = 0$, and 
\begin{displaymath}
P = \left(
\begin{array}{cc}
1.25 & 0.25 \\
0.25 & 0.25
\end{array}
\right)
\end{displaymath}
The trajectory originating within the viability kernel remains in the safe region for all time, while the trajectory originating outside the viability kernel leaves the safe region returning to the origin.
\section{Conclusions and Future Work}
\label{sec:conclusion}
This paper presented a framework for verification and synthesis of control barrier functions for safe control. We mapped the  conditions for safe CBFs and HOCBFs to the existence of solutions to a system of polynomial equations. Using the Positivstellensatz, we proved that these conditions are equivalent to sum-of-squares optimization problems, which can be solved via semidefinite programming. We proposed CBF construction heuristics, including an alternating-descent semidefinite programming approach, a Lyapunov function-based method, and a CBF construction for convex unsafe regions. Our results were verified via numerical study, which showed that our approach distinguished between safe and unsafe states of an unstable linear system.

\bibliographystyle{IEEEtran}
\bibliography{CDC2021}


\end{document}